\DeclareMathOperator*{\argmax}{arg\,max}
\title{Coalitions on the Fly in Cooperative Games}
\author[1]{Yao Zhang\thanks{zhang@agent.inf.kyushu-u.ac.jp}}
\author[2]{Indrajit Saha\thanks{indrajit@inf.kyushu-u.ac.jp}}
\author[3]{Zhaohong Sun\thanks{sunzhaohong1991@gmail.com}}
\author[4]{Makoto Yokoo\thanks{yokoo@inf.kyushu-u.ac.jp}}
\affil[1,2,3,4]{Kyushu University, Japan}
\date{}
\newtheorem{definition}{Definition}
\newtheorem{proposition}{Proposition}
\newtheorem{example}{Example}
\newtheorem{theorem}{Theorem}
\newtheorem{lemma}{Lemma}
\newtheorem{assumption}{Assumption}
\begin{document}

\maketitle

\begin{abstract}
In this work, we examine a sequential setting of a cooperative game in which players arrive dynamically to form coalitions and complete tasks either together or individually, depending on the value created. Upon arrival, a new player as a decision maker faces two options: forming a new coalition or joining an existing one. We assume that players are greedy, i.e., they aim to maximize their rewards based on the information available at their arrival. The objective is to design an online value distribution policy that incentivizes players to form a coalition structure that maximizes social welfare. We focus on monotone and bounded cooperative games. Our main result establishes an upper bound of $\frac{3\mathsf{min}}{\mathsf{max}}$ on the competitive ratio for any irrevocable policy (i.e., one without redistribution), and proposes a policy that achieves a near-optimal competitive ratio of $\min\left\{\frac{1}{2}, \frac{3\mathsf{min}}{\mathsf{max}}\right\}$, where $\mathsf{min}$ and $\mathsf{max}$ denote the smallest and largest marginal contribution of any sub-coalition of players respectively. Finally, we also consider non-irrevocable policies, with alternative bounds only when the number of players is limited.
\end{abstract}

\section{Introduction}
Cooperative game theory, a classic area 
in game theory, considers how to distribute the value created by a coalition of players~\citep{driessen2013cooperative, shapley1953value, von2007theory}. In real-world applications, two major concerns have been discussed in the literature. The first is that players may not be willing to form a grand coalition, even if every player brings a positive marginal contribution to the others. This can occur because additional players may play symmetric roles without generating enough extra earnings~\citep{aumann1974cooperative}. A key concept addressing this issue is the \emph{coalition structure}, which allows players to form sub-coalitions 
{next to} a grand coalition~\citep{hart1983endogenous, ueda2018coalition}. The key challenges then include 
how to partition the players into a coalition structure and how to distribute the value generated by the coalitions while aiming for higher social welfare and satisfying properties such as stability, i.e., a player will not be attracted to another coalition. The second concern is that, in many scenarios, not all players are able to join the game at the beginning. That is, players may require invitations to join~\citep{zhang2022incentives}, or they may arrive at different times, \emph{in an online manner}~\citep{aziz2025participation, ge2024incentives}, so that players' strategic behaviors must also be taken into account.

In this paper, we consider scenarios where both concerns arise simultaneously—namely, the formation of coalition structures  in an online manner. {There are many real-world applications where it is realistic to assume that agents arrive over time and the entire input is not available from the beginning;} for example, consider a regional economic plan initiated by the government, which aims to attract investment or launch projects across different markets or industries. In such a setting, people may arrive at different times and can choose to invest in or join an existing project, or to start a new one. Similarly, in the case of a large-scale social or experimental project involving multiple regions, new participants may join an already-formed group or establish a new local group as part of the expansion through recruitment, {or people entering social networks. Other possible applications can also be found in \citep{flammini2021online}.}

All these scenarios suggest an online cooperative game with coalition structures. As with most cooperative games, we need to design a distribution policy to allocate the value earned by the players. Since players arrive sequentially and unpredictably, such a policy must be implementable in an online manner—this is also where the main challenge lies. The primary objective is to guide players in forming a coalition structure that maximizes social welfare, i.e., the total value created by all coalitions. Regarding players' behaviors, we assume they have no  prior knowledge of the future and must act myopically as greedy agents. More precisely, the key research question is: \emph{How to design an online policy that guides sequentially arriving greedy players to form coalitions that achieve near-optimal social welfare?}

\noindent\textbf{Contributions.} Within this setting, 
our key contributions as follows.
\begin{itemize}
    \item We introduce a model of online cooperative games with coalition structures, which considers how coalition structures are formed and how to guide players to form appropriate coalitions through policy design (Section~\ref{sec:model}).
    \item In the domain of monotone and bounded cooperative games, 
    involving greedy players, we provide upper and lower bounds on the worst-case competitive ratio for social welfare under irrevocable policies (where reallocation is not allowed). The lower bound is derived by allocating marginal contributions exceeding a threshold, which is very close to the proven upper bound (Section~\ref{sec:irrevocable}).
    \item We also model online policies that are non-irrevocable. Here, we additionally consider a variant of greedy players and provide both upper and lower bounds of the competitive ratio in each case, as shown in Table~\ref{tab:results}. Especially, without additional assumption indicated by (*), it shares the same bounds as above, which suggests that even when decisions are irrevocable, it is still possible to design a policy that achieves near-optimal social welfare, with a provable competitive ratio bound in general cases (Section~\ref{sec:non-irr}).
\end{itemize}

{\noindent\textbf{Outline.} This paper is organized as follows. Section~\ref{sec:relatedwork} briefly reviews related literature. Section~\ref{sec:model} introduces the model and summarizes the main results. Sections~\ref{sec:irrevocable} and~\ref{sec:non-irr} present the results for irrevocable and non-irrevocable policy designs, respectively. Finally, Section~\ref{sec:future} concludes the paper and discusses future research directions.}

\begin{table*}[t]
\centering
\begin{tabular}{cccc}
\toprule
$\alpha$ (UB/LB) & $V_1$                   & $V_2$ & $V_{\geq 3}$ \\ \midrule
Irr                  & 1/1[Thm~\ref{thm:irrv1}]                    & 1/1[Thm~\ref{thm:irrv2}]   & $\frac{3\mathsf{min}}{\mathsf{max}}$[Thm~\ref{thm:irrv3}]/$\min\left\{\frac{1}{2}, \frac{3\mathsf{min}}{\mathsf{max}}\right\}$[Thm~\ref{thm:amchlb3}]  \\
\\[-.8em]
Irr + Na             & 1/1                    & -/$\frac{\mathsf{min}+\sqrt{\mathsf{min}(\mathsf{min}+4\mathsf{max})}}{2\mathsf{max}}$[Thm~\ref{thm:amchlb2}]   &  -/$\min\left\{\frac{1}{2}, \frac{3\mathsf{min}}{\mathsf{max}}\right\}$[Thm~\ref{thm:amchlb3}]  \\
\\[-.8em]
Na + IR              & 1/1                    & -   & $\frac{3\mathsf{min}}{\mathsf{max}}$[Thm~\ref{thm:nirrirub}]/-  \\
\\[-.8em]
Na (Greedy)                   & 1/1                    & -   & $\frac{3}{n}$[Thm~\ref{thm:nirrub}]/$\frac{2}{n}$[Thm~\ref{thm:nirrlb}] (*)   \\
\\[-.8em]
Na (Pessimistic)     & 1/1                    & -   & $\frac{3}{n}$[Thm~\ref{thm:nirrub}]/$\frac{2}{n}$[Thm~\ref{thm:nirrlb}] (*)  \\ \bottomrule
\end{tabular}
\caption{Results of Upper Bounds (UB) and Lower Bounds (LB) for worst-case Competitive Ratio in different classes of policies. Different columns indicate three subclasses of bounded monotone value functions according to how the maximum value is bounded. The last row introduces a variant of greedy players, which is called pessimistic. (*) means the bounds have an additional assumption that the number of players $n\leq (\mathsf{max/\mathsf{min}})-1$.}\label{tab:results}
\end{table*}
\section{Related Work}\label{sec:relatedwork}
\subsection{Cooperative Games with Coalition Structures}
The idea of a coalition structure is almost as old as traditional cooperative game theory itself~\citep{aumann1974cooperative}, and it remains one of the central considerations in the literature on cooperative games under various settings. For example, \citet{chalkiadakis2010cooperative} considers overlapping coalition structures, where a player can belong to multiple coalitions. \citet{kamijo2009two} characterizes a two-step Shapley value that accounts for the relationships between different coalitions. Many studies also investigate what a good coalition structure—one that yields high social welfare—might look like in specific types of cooperative games~\citep{bachrach2013optimal,meng2012cooperative,ueda2018coalition}. The main goal of these works is to determine which coalition structures should form and how value sharing can satisfy axioms related to fairness and stability.

Departing from the traditional research line, we consider an online scenario and focus more on players' actions—specifically, the coalition structure is formed through the sequential decision-making of individual players. Consequently, we study how the value-sharing policy influences their behavior.

\subsection{Online Cooperative Games}
The most similar concept to online cooperative games, which considers players arriving at different times, was first formally defined by~\citet{ge2024incentives}. The key motivation is that participants typically cannot join simultaneously, and it is often impractical to wait for all of them before sharing the value. Moreover, the set of future participants is usually unknown. As an initial work in this research direction, they focus on a setting where a player can choose to wait after realizing the game, and aim to incentivize early participation while ensuring fairness through the constraint that the value is shared in expectation equal to the Shapley value. \citet{aziz2025participation} continues working on the same setting, analyzing alternative fairness axioms. 

Our work shares a similar motivation, but we focus on a scenario where a player can either form a new sub-coalition or join an existing one, with the goal of guiding them toward forming a coalition structure that ensures high social welfare.

\subsection{Online Coalition Formations}
There is a rich body of literature on coalition formation games, and more specifically, on hedonic games~\citep{aziz2016hedonic,aziz2019fractional,bilo2022hedonic,hajdukova2006coalition,kerkmann2022altruistic,ray2015coalition}. Unlike cooperative games that focus on the value generated by different coalitions, hedonic games center on players’ preferences over various coalition compositions. When considering the online setting, their approach becomes closely related to ours when cardinal preferences are involved. For instance, \citet{flammini2021online} models player preferences using a weighted graph and studies algorithms that must irrevocably decide whether to place a newcomer into an existing coalition or create a new one, aiming to maximize the overall coalition value defined by the sum of players' cardinal preferences. Building on this, \citet{bullinger2024stability} extends the model to incorporate stability considerations, while \citet{boehmer2023causes} explores scenarios where player preferences may change dynamically over time.

The main difference between these works and ours is that we consider coalition values defined by a cooperative game rather than by player preferences.  This necessitates a \emph{value-sharing policy}, and then players will independently decide which coalition to join, rather than being directly assigned by an algorithm.

\section{Model}\label{sec:model}
Consider a cooperative game with $n$ players represented by $N =\{a_1,a_2, \ldots , a_n\}$, and a characteristic function $v: 2^N\mapsto \mathbb{R}^+$. The characteristic function specifies the value that can be created by each coalition. We assume that $v$ is monotone and bounded.

\begin{assumption}\label{asum:boundedv}
For any coalitions $S\subseteq T\subseteq N$, there are bounded values $\mathsf{min}$ and $\mathsf{max}$, such that $\mathsf{min}\leq v(S)\leq v(T)\leq \mathsf{max}$.    
\end{assumption}
Note that the bounded values are known to the policy designer. In practice, this might represent the situation where each player can at least perform a single independent task, and for each sub-project or market, there is a limit to its capacity. Let the space of all monotone and bounded characteristic functions be $V$. Assumption~\ref{asum:boundedv} holds throughout the whole paper. Players may form a coalition structure $C$, which is a partition of all players. Specifically, if $C=\{N\}$, we call it a grand coalition. The total value created by all players under the coalition structure $C$, i.e., the social welfare, is given by $ \mathsf{SW}(C\mid v) = \sum_{S\in C}v(S)$. 

\begin{figure}[t]
    \centering
    \includegraphics[width=0.75\linewidth]{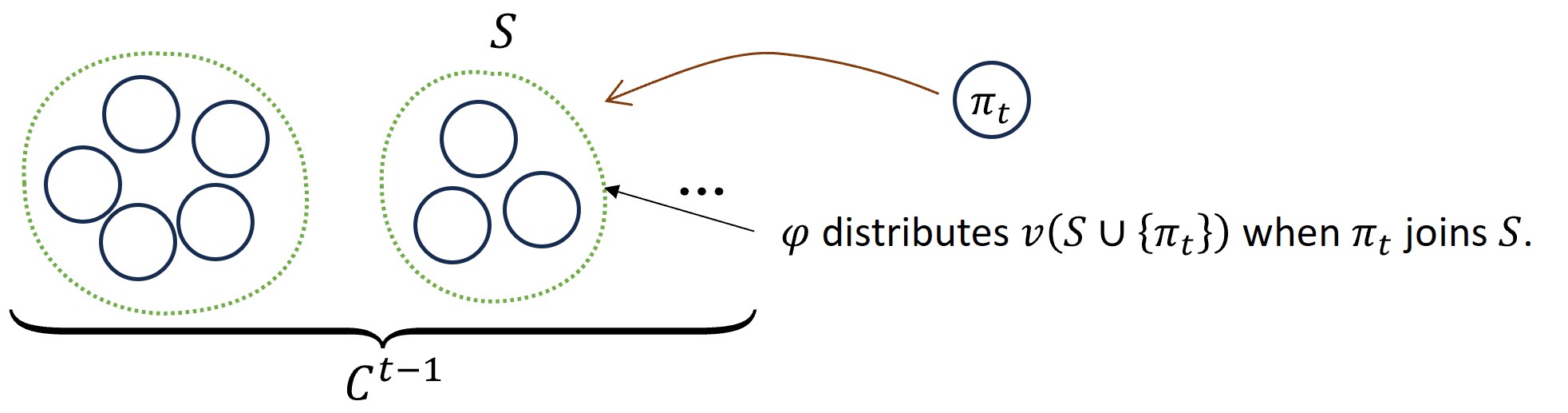}
    \caption{At each time step $t$, a player $\pi_t$ arrives and joins a coalition $S$ (could be empty, which means she creates a new one). Then we call the policy to determine the value distribution at this time.}
    \label{fig:model}
\end{figure}

In this work, we investigate a scenario where players arrive online at different times $t\in \{1,2,\dots,n\}$, and independently decide which coalition structure to form. More precisely, we assume that one player arrives in the system at each (discrete) time step. When a player arrives, she must decide whether to join an existing coalition or form a new coalition on their own. We assume that players can immediately observe the value of existing coalitions upon arrival. This scenario is practical in situations such as hiring agents in a company, where valuations are known upfront by the agents. Let $C^{t-1}$ denote the set of coalitions formed by previous players before time $t$, with $C^0=\emptyset$. Here, how $C^t$ is formed will be determined by the arrival order, how the value will be distributed, and which strategies are chosen by players (Section~\ref{sec:gp&cr} gives the details), but the following assumption holds throughout the paper.

\begin{assumption}\label{asum:one-time-decision}
   A player's decision is based solely on the information available at time $t$ and cannot be changed afterward.
\end{assumption}
Let $\pi=(\pi_1,\pi_2,\dots,\pi_n)$ be a permutation of all players, representing their arrival order. That is, the $t$-th player $\pi_t$ in $\pi$ is the one who arrives at time $t$. Denote $\pi_{\prec t}$ as the set of players who arrived before time $t$, and let $\pi^{-1}(a_i)$ represent the time at which player $a_i$ arrives.  

\begin{definition}[Sub-order and Prefix]
Given an arrival order $\pi$, and a subset of players $S \subseteq N$, denote $\pi_{|S}$ as a sub-order of players in $S$ such that they appear in the same relative order in $\pi$. Furthermore, it is said to be a prefix of $\pi$ if the players in $S$ are exactly those who arrive first under $\pi$.
\end{definition}

A player's decision depends on how the value is distributed within each coalition. A player's utility is the share of the total value created by the coalition to which they belong. The value distribution is determined by a \emph{distribution policy} $\varphi$, which is announced by the principal/mechanism designer.

\begin{definition}[Distribution policy]
    A distribution policy $\varphi$ is a function that takes as input a subset $S\subseteq N$ and its corresponding coming sub-order $\pi_{|S}$, and outputs a vector of $\mathbb{R}^{|S|}$. When $S$ forms a coalition, $\varphi$ then determines the value distribution, and for each player $i\in S$, $\varphi_i(S, \pi_{|S}\mid v)$ represents the value distributed to her\footnote{We may simply use $\varphi_i(S, \pi_{|S})$, $\varphi_i(S)$ or even $\varphi_i$ if there is no ambiguity.}.
\end{definition}

Notice that our model focuses on an online scenario, meaning that the value distribution is determined each time a new player arrives (see Figure~\ref{fig:model}). Additionally, players make decisions based on the distribution specified by the policy at the time of their arrival. Therefore, the main goal of this work is to design a suitable distribution policy to guide players' actions, ultimately maximizing social welfare as much as possible.

\subsection{Basic Attributes of a Distribution Policy}
Before modeling players' actions, we first introduce some fundamental attributes of a distribution policy. These attributes serve as basic requirements for a distribution policy or as criteria for categorizing different classes of policies. First and foremost, as the most fundamental requirement, the policy must allocate the entire value created, which is a common assumption in cooperative game scenarios~\citep{peleg2007introduction}. Therefore, throughout this paper, we only consider policies that adhere to this property, even if not explicitly stated.

\begin{definition}[Non-wastefulness]
    A distribution policy $\varphi$ is \emph{non-wasteful}\footnote{In most literature of cooperative games, it is also called efficiency~\citep{chalkiadakis2011computational}. Since in this work we seek higher social welfare, we call it non-wastefulness to avoid misunderstanding.} if it satisfies for any coalition structure $C$ of $N$, and any $S\in C$, $\sum_{i\in S} \varphi_i(S) = v(S)$.
\end{definition}

Next, we consider two additional attributes. The first concerns how the policy operates in an online manner. More specifically, when a player arrives, the policy should determine the value distribution among the existing coalitions immediately. If these decisions are irrevocable—meaning that once the value is distributed, it will not be adjusted when new players arrive—then the corresponding distribution policy is called \emph{irrevocable}. Irrevocable distribution policies not only offer a straightforward way to be implemented in an online setting (since each time, we only need to allocate the marginal contribution) but also ensure that players do not need to worry about future changes, as their utility will never decrease. In Section~\ref{sec:non-irr}, we also discuss how a non-irrevocable policy can be implemented and how players' actions might be influenced under such policies.

\begin{proposition}\label{prop:irr-def}
    If a distribution policy $\varphi$ is \emph{irrevocable} (Irr), then for any arrival order $\pi$, any $S\subseteq T\subseteq N$ such that $\pi_{|S}$ is a prefix of $\pi_{|T}$, and any $i\in S$, we have $\varphi_i(S, \pi_{|S}) \leq \varphi_i(T, \pi_{|T})$.
\end{proposition}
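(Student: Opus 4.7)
The plan is to interpret the statement as the formalization of the informal notion of irrevocability introduced just above: once a value has been handed to a player, that allocation can only stay the same or grow, never shrink, as subsequent players arrive. The goal is to derive the claimed inequality $\varphi_i(S, \pi_{|S}) \leq \varphi_i(T, \pi_{|T})$ from this intuition together with Assumption~\ref{asum:one-time-decision}.

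First I would isolate the atomic form of the claim. Suppose a single new player $j \notin S$ joins the coalition $S$ immediately after $\pi_{|S}$, producing the order $\pi_{|S \cup \{j\}}$ with $\pi_{|S}$ as its prefix. Then for every $i \in S$, irrevocability forces $\varphi_i(S, \pi_{|S}) \leq \varphi_i(S \cup \{j\}, \pi_{|S \cup \{j\}})$: any strict decrease would amount to a retroactive withdrawal of value already distributed to $i$, which is exactly what the informal definition of ``irrevocable'' forbids. Note that equality need not hold, because the policy is permitted to pass some of $j$'s marginal contribution on to $i$.

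The main argument is then an induction on $k := |T \setminus S|$. The base case $k=0$ is trivial since $T = S$ and the inequality collapses to an equality. For the inductive step, let $j$ be the last-arriving player of $T \setminus S$ under $\pi$, and set $T' = T \setminus \{j\}$. Then $\pi_{|S}$ remains a prefix of $\pi_{|T'}$, and $\pi_{|T'}$ is a prefix of $\pi_{|T}$, with $|T' \setminus S| = k - 1$. The induction hypothesis gives $\varphi_i(S, \pi_{|S}) \leq \varphi_i(T', \pi_{|T'})$, while the atomic form, applied to $T'$ and $j$, gives $\varphi_i(T', \pi_{|T'}) \leq \varphi_i(T, \pi_{|T})$. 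Chaining the two inequalities yields the conclusion.

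I do not anticipate any substantive obstacle, since the proposition is essentially a definitional repackaging. The only detail worth checking is the prefix bookkeeping---that removing the last arrival of $T \setminus S$ from $\pi_{|T}$ produces a legitimate order $\pi_{|T'}$ which still contains $\pi_{|S}$ as an initial segment. Once that is verified, the whole proof reduces to iterating the one-step irrevocability condition a finite number of times.
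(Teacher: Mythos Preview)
Your proposal is correct and follows essentially the same approach as the paper, which simply invokes the informal definition of irrevocability in a single sentence: ``By the definition of irrevocability, the value distributed to any player $i$ will not decrease when more players join.'' Your induction on $|T\setminus S|$ is precisely the formal unpacking of that one-line appeal, and the prefix bookkeeping you flag is indeed routine.
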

\begin{proof}
    By the definition of irrevocability, the value distributed to any player $i$ will not decrease when more players join. Hence, the statement holds directly.
\end{proof}
The second attribute concerns the information that the policy can utilize during the online process. In practice, the principal usually cannot predict who will join the project or what kind of contributions they will make during recruitment. This implies that a distribution policy in such scenarios must be independent of future information about $v$. We refer to such a policy as \emph{non-anticipative}.

\begin{definition}[Non-anticipation]
    A distribution policy $\varphi$ is \emph{Non-anticipative} (Na) if for any arrival order $\pi$, any $S\subseteq N$ such that $\pi_{|S}$ is a prefix of $\pi$, any $i\in S$, and any two characteristic functions $v$, $v'$, such that $v(K) = v'(K)$ for any $K\subseteq S$, we have $\varphi_i(S, \pi_{|S} \mid v) = \varphi_i(S, \pi_{|S} \mid v')$.
\end{definition}

\subsection{Greedy Players and Competitive Ratio}\label{sec:gp&cr}
We now formally model a player's decision-making process. Naturally, a player always seeks higher utility. However, she may not have information about future events. Given this information structure, we assume that a player can only make decisions based on the information available at the time of her arrival. She aims to maximize her immediate utility when joining a coalition, which implies that her actions follow a greedy strategy.

\begin{assumption}\label{asum:greedy}
    The players are greedy and aim to maximize their individual value distribution.
\end{assumption}
Formally, a player $\pi_t$ selects a coalition to join (or forms a new coalition on her own) based on the value she can obtain as if she were the last player. That is, she chooses a coalition $S^*$ that satisfies
\[ S^*\in \argmax_{S\in C^{t-1}\cup\{\emptyset\}} \varphi_{\pi_t}(S\cup\{\pi_t\}, \pi_{|S\cup\{\pi_t\}}), \]
subject to predetermined tie-breaking rules\footnote{A predetermined tie-breaking rule does not affect our results since we have already used a small perturbation to enforce the worst-case choice in our analysis, which also corresponds to the worst-case outcome in tie-breaking scenarios.}. If a player selects a coalition in this manner, she is referred to as a \emph{greedy player}\footnote{An exception may occur when considering non-irrevocable policies; see Section~\ref{sec:non-irr} for details.}. Now we can describe the outcomes under different distribution policies. Specifically, given a distribution policy $\varphi$, $C^t = C_{\mathrm{g}}^t(v,\pi \mid \varphi)$ is the coalition structure formed by greedy players, and simply write $C_{\mathrm{g}}^n$ as $C_{\mathrm{g}}$. Our primary objective, maximizing the {worst-case competitive ratio (WCR)} of social welfare can thus be expressed as follows.

\begin{definition}[Worst-case Competitive Ratio]
Let $C^*$ be a coalition structure s.t. $C^*(v)\in \arg\max_{C} \mathsf{SW}(C\mid v)$.  Then the worst-case competitive ratio  of social welfare with given policy $\varphi$ is 
\[ \alpha = \inf_{v,\pi} \frac{\mathsf{SW}(C_{\mathrm{g}}(v,\pi\mid\varphi))}{\mathsf{SW}(C^*(v))}. \]
\end{definition}

{\noindent \textbf{Note.} Throughout the rest of the paper, unless otherwise stated, by competitive ratio we mean the worst-case competitive ratio.}



\subsection{Other Properties}
Alongside the competitive ratio of social welfare, we also consider other desirable properties of the outcomes achieved by suitable distribution policies. One of the most commonly studied properties in the literature is stability~\citep{peleg2007introduction}. In our setting, a player can only determine her actions based on the information available at the time of her arrival. Therefore, when considering stability, we focus only on whether a player would be attracted to other coalitions that have already been formed when she arrives. Intuitively, stability implies that a player will not regret her decision. For convenience, for any time $t$, let $\overline{C}_g^{t} \subseteq C_g$ denote the final coalitions formed from those in $C^{t}_g$, as shown in Figure~\ref{fig:bar-notation}. We then provide the following definitions. First, it is to prevent players from regretting not creating a coalition alone.

\begin{definition}[Individual Rationality (IR)]
    A distribution policy $\varphi$ is \emph{individually rational} if for any $v$, it satisfies for any $S\in C^t_g$ with $t\leq n$, and any $i\in S$, we have $\varphi_i(S) \geq v(\{i\})$.
\end{definition}

Then, a confined version of Nash stability prevents players from regretting for not joining other coalitions when they arrive.

\begin{definition}[Temporal Nash stability (TNS)]
    A distribution policy $\varphi$ is \emph{temporal Nash stable} if for any $v$, it satisfies for any $S\in C_g$, and any $i\in S$, there does not exist another coalition $S'\in \overline{C}_g^{\pi^{-1}(i)-1}\cup \{\emptyset\}$, such that $\varphi_i(S'\cup\{i\},\pi_{|S'\cup\{i\}}) > \varphi_i(S,\pi_{|S})$.
\end{definition}

\begin{figure}[t]
    \centering
    \includegraphics[width=0.8\linewidth]{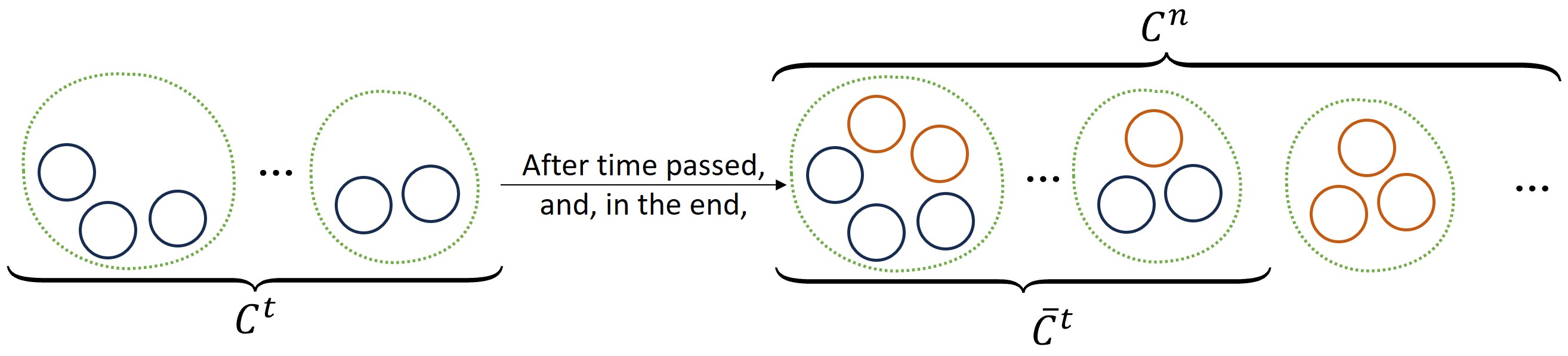}
    \caption{$\overline{C}^{t}$ contains the coalitions in the final coalition structure which have already had at least their first player at time $t$.}
    \label{fig:bar-notation}
\end{figure}

\subsection{Summary of Our Results}\label{sec:vclass}
We first summarize all the assumptions of the model:
\begin{enumerate}
    \item Assumption~\ref{asum:boundedv} and Assumption~\ref{asum:one-time-decision} are hold throughout the paper.
    \item Assumption~\ref{asum:greedy} is hold throughout all results except for some in Section~\ref{sec:non-irr}, where players can have alternative actions, referred to as \emph{pessimistic greedy players}.
\end{enumerate}
Then, Table~\ref{tab:results} summarizes the results for different classes of policies. Here, based on Assumption~\ref{asum:boundedv} and more specifically, we consider a subclass $V_{\delta}$ of $V$, where the characteristic functions further satisfy $\delta\cdot \mathsf{min}\leq \mathsf{max}< (\delta + 1)\cdot\mathsf{min}$ for some integer $\delta\geq 1$. It is clear that $V=\bigcup_{\delta=1}^{\infty} V_{\delta}$. Actually, the parameter $\delta$ captures how the marginal contribution of a new player diminishes as the coalition size increases. For example, when $\delta = 2$, it means that any coalition of size three or more has no greater total value than the players forming singleton coalitions separately. We will see that the bounds differ depending on whether {$\delta = 1$, $\delta = 2$, or $\delta \geq 3$.} More concretely, for the class of $V_1$, an optimal coalition structure can always be achieved; for the class of $V_2$, an optimal coalition structure can be achieved when \emph{non-anticipation} is not required; finally, as the most important results of this paper, for the class of {$V_{\delta}$ with $\delta \geq 3$} (we will use the notation $V_{\geq3}$ for simplicity), an upper bound of $\frac{3\mathsf{min}}{\mathsf{max}}$ holds for any class of polices, and we give a policy that achieves a competitive ratio of $\min\left\{\frac{1}{2}, \frac{3\mathsf{min}}{\mathsf{max}}\right\}$, which is near-optimal. 

At the end of this section, a useful lemma for the class $V_{\delta}$ indicates that if a coalition structure maximizes social welfare, then the number of players within each coalition is restricted. 

\begin{lemma}\label{lem:ssize}
    For a characteristic function $v\in V_{\delta}$, if a coalition structure $C^*$ maximizes the social welfare, then for any $S\in C^*$, $|S|\leq \delta$.
\end{lemma}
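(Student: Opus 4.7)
My plan is to argue by contradiction: assume some block $S \in C^*$ satisfies $|S| \geq \delta + 1$, and then exhibit an alternative coalition structure with strictly larger social welfare, contradicting the optimality of $C^*$.

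The key idea is to split the oversized block $S$ into singletons. Concretely, I will consider the coalition structure $C' = (C^* \setminus \{S\}) \cup \{\{i\} : i \in S\}$, which partitions $N$ in the same way as $C^*$ except that $S$ is replaced by $|S|$ singleton coalitions. To compare $\mathsf{SW}(C' \mid v)$ with $\mathsf{SW}(C^* \mid v)$, it suffices to compare the contribution of the block $S$ in $C^*$, namely $v(S)$, with its contribution in $C'$, namely $\sum_{i \in S} v(\{i\})$. All other blocks in the two structures are identical and cancel out.

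Now I would invoke Assumption~\ref{asum:boundedv} and the defining inequality of $V_\delta$. On the one hand, $v(S) \leq \mathsf{max} < (\delta + 1)\,\mathsf{min}$. On the other hand, since each singleton satisfies $v(\{i\}) \geq \mathsf{min}$, we have $\sum_{i \in S} v(\{i\}) \geq |S| \cdot \mathsf{min} \geq (\delta + 1)\,\mathsf{min}$. Combining these yields $\sum_{i \in S} v(\{i\}) > v(S)$, so $\mathsf{SW}(C' \mid v) > \mathsf{SW}(C^* \mid v)$, contradicting the optimality of $C^*$. Hence every $S \in C^*$ must satisfy $|S| \leq \delta$.

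I do not anticipate any serious obstacle here. The only subtlety is making sure the inequality is strict, which relies on the strict upper bound $\mathsf{max} < (\delta + 1)\,\mathsf{min}$ in the definition of $V_\delta$, and on the fact that the lower bound $\mathsf{min} \leq v(\{i\})$ applies uniformly to every singleton. Both are immediate from the stated assumptions, so the argument is essentially a one-line calculation dressed up as a contradiction.
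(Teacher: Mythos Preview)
Your proposal is correct and follows essentially the same approach as the paper: argue by contradiction, break the oversized block $S$ into singletons, and chain the inequalities $\sum_{i\in S} v(\{i\}) \geq |S|\cdot\mathsf{min} \geq (\delta+1)\cdot\mathsf{min} > \mathsf{max} \geq v(S)$ to contradict optimality of $C^*$. The paper's proof is just a terser version of what you wrote.
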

\begin{proof}
    We can prove this statement by contradiction. Suppose there exists a coalition $S\in C^*$ such that $|S|> \delta$. Then, if breaking $S$ into singletons, we will have
    \[ \sum_{i\in S} v(\{i\}) \geq \sum_{i\in S} \mathsf{min} = |S|\cdot \mathsf{min} \geq (\delta + 1)\cdot \mathsf{min} > \mathsf{max} \geq v(S), \]
    which contradicts the assumption that $C^*$ maximizes the social welfare. Therefore, the statement is true.
\end{proof}

\section{Irrevocable Policies}\label{sec:irrevocable}
In this section, we consider irrevocable distribution policies, which can be seen as processes that allocate marginal contributions among the existing players each time a new player joins. When a player forms a new coalition, she will instantly receive the value of being a singleton. First, we observe that such a policy is naturally individually rational, since only in the absence of better options does a greedy player form a new coalition, and irrevocability ensures non-decreasing utility. 
\begin{proposition}\label{prop:ir}
    Any irrevocable policy is individually rational.
\end{proposition}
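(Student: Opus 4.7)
The plan is to combine three facts that are already in place: (i) non-wastefulness forces $\varphi_i(\{i\},\pi_{|\{i\}}) = v(\{i\})$ for the singleton option; (ii) greediness means an arriving player picks a coalition whose share weakly dominates every other option available at her arrival time, and forming a singleton is always one such option; and (iii) irrevocability (Proposition~\ref{prop:irr-def}) guarantees that a player's share can only weakly grow as further players join her coalition.

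More concretely, first I would fix an arbitrary characteristic function $v$, an arbitrary time $t\leq n$, and a coalition $S\in C^t_g$ with an arbitrary member $i\in S$. Let $\tau = \pi^{-1}(i)$ be $i$'s arrival time and let $S_0\subseteq S$ denote the subset of $S$ that has already arrived by time $\tau$, so $\pi_{|S_0}$ is a prefix of $\pi_{|S}$. At time $\tau$, one of the options available to $i$ is to form a new coalition on her own, which by non-wastefulness would deliver her exactly $v(\{i\})$. Since $i$ is greedy (Assumption~\ref{asum:greedy}), her actual choice $S_0$ must satisfy $\varphi_i(S_0,\pi_{|S_0}) \geq v(\{i\})$.

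Then I would apply Proposition~\ref{prop:irr-def} to the pair $S_0\subseteq S$: because the policy is irrevocable and $\pi_{|S_0}$ is a prefix of $\pi_{|S}$, we have $\varphi_i(S,\pi_{|S}) \geq \varphi_i(S_0,\pi_{|S_0})$. Chaining the two inequalities yields $\varphi_i(S)\geq v(\{i\})$, which is precisely individual rationality. Since $v$, $t$, $S$, and $i$ were arbitrary, the policy is individually rational.

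I do not expect any serious obstacle here; the statement is essentially a bookkeeping consequence of the definitions already laid out. The only subtle point to double-check is that the coalition $S_0$ chosen by $i$ at her arrival is indeed a prefix-subset of the eventual coalition $S$ in the sense required by Proposition~\ref{prop:irr-def}, but this follows from Assumption~\ref{asum:one-time-decision}: $i$'s membership decision is fixed at time $\tau$, so every later player who ends up in $S$ joins after $i$, making $\pi_{|S_0}$ a prefix of $\pi_{|S}$ as needed.
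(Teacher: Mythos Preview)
Your proposal is correct and follows essentially the same approach as the paper: both argue that greediness guarantees at least $v(\{i\})$ at the moment of arrival (since forming a singleton is always an option), and then irrevocability preserves this lower bound as the coalition grows. Your write-up is in fact slightly more careful than the paper's, since you explicitly track the intermediate coalition $S_0$ at arrival time and invoke Proposition~\ref{prop:irr-def} for the prefix inclusion $S_0\subseteq S$, and you verify IR at every time $t\le n$ as the definition requires rather than only for the final $C_g$.
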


\begin{proof}
    Consider the coalition structure $C_g$ formed by greedy players given any characteristic function $v$ and  arrival order $\pi$ under an irrevocable policy $\varphi$. We complete the proof by considering following cases.
    
    \noindent  Case 1: Suppose for any player $i\in N$, if $\{i\}\in C_g$, then $\varphi_i \geq v(\{i\})$ is satisfied.
    
    \noindent  Case 2: Suppose that   if $\{i\}\notin C_g$.   Also, suppose $i$ is not the first one to come into the coalition she belongs to. Otherwise, the situation is similar to Case 1. Since she is greedy, she will only join the coalition if she can receive at least $v(\{i\})$. Otherwise, she will form a new coalition by herself. This ensures that  $\varphi_i\geq v(\{i\})$, due to the irrevocability condition. Therefore, $\varphi$ must be IR.
   
\end{proof}
Hence, for an irrevocable policy, we only need to check whether it is TNS. In the following parts of this section, we consider the upper bounds of the competitive ratio that an irrevocable policy can achieve. We then introduce a natural class of distribution policies and evaluate their performance.

\subsection{Upper Bounds}
The upper bounds of the competitive ratio for irrevocable policies may differ depending on whether non-anticipation is required and which class of characteristic functions is considered. First, we examine the special case where characteristic functions belong to the class $V_1$. In this case, both the lower bound and the upper bound is $1$.

\begin{theorem}\label{thm:irrv1}
    For any irrevocable distribution policies, when the characteristic function $v\in V_1$, the competitive ratio must be $\alpha = 1$.
\end{theorem}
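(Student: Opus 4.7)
The plan is to show that in the class $V_1$, greedy players under any irrevocable policy will necessarily form the singleton coalition structure, which by Lemma~\ref{lem:ssize} is already optimal. The argument has two steps: (i) identify what the optimum looks like for $v \in V_1$, and (ii) argue that every arriving greedy player strictly prefers to start a new singleton over joining any existing coalition.

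For step (i), Lemma~\ref{lem:ssize} with $\delta=1$ immediately tells us that every block of an optimal coalition structure $C^*$ has size one, hence $\mathsf{SW}(C^*)=\sum_{i\in N} v(\{i\})$. This value is also the trivial upper bound on the social welfare of any coalition structure when $v\in V_1$, since for any $S$ with $|S|\ge 2$ we have $v(S)\le \mathsf{max}<2\mathsf{min}\le\sum_{i\in S}v(\{i\})$.

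For step (ii), fix an irrevocable policy $\varphi$, an arrival order $\pi$, and consider the $t$-th player $\pi_t$. If she joins an existing coalition $S\in C^{t-1}$, then by Proposition~\ref{prop:irr-def} every incumbent $j\in S$ keeps at least $\varphi_j(S,\pi_{|S})$, and by non-wastefulness the entire value $v(S\cup\{\pi_t\})$ is distributed inside $S\cup\{\pi_t\}$. Combining these with $\sum_{j\in S}\varphi_j(S,\pi_{|S})=v(S)$ gives the pointwise bound $\varphi_{\pi_t}(S\cup\{\pi_t\},\pi_{|S\cup\{\pi_t\}})\le v(S\cup\{\pi_t\})-v(S)$. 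On the other hand, starting a new singleton yields $v(\{\pi_t\})\ge \mathsf{min}$ by irrevocability and Assumption~\ref{asum:boundedv}. Since $v\in V_1$ means $\mathsf{max}<2\mathsf{min}$, the marginal contribution satisfies $v(S\cup\{\pi_t\})-v(S)\le \mathsf{max}-\mathsf{min}<\mathsf{min}\le v(\{\pi_t\})$, so starting a new singleton is strictly better. Hence the greedy player always chooses to be alone, $C_g$ is the all-singletons partition, and $\mathsf{SW}(C_g)=\sum_i v(\{i\})=\mathsf{SW}(C^*)$, giving $\alpha=1$.

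The only subtlety I expect is the inequality $\varphi_{\pi_t}(S\cup\{\pi_t\})\le v(S\cup\{\pi_t\})-v(S)$: it relies on combining irrevocability (so incumbents cannot be cut) with non-wastefulness applied to both $S$ and $S\cup\{\pi_t\}$. Once this marginal-contribution ceiling is in hand, the strict gap $\mathsf{max}-\mathsf{min}<\mathsf{min}$ that characterizes $V_1$ does the rest, and the matching lower bound $\alpha\ge 1$ follows because $\alpha\le 1$ trivially.
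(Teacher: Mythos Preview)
Your proposal is correct and follows essentially the same approach as the paper: identify via Lemma~\ref{lem:ssize} that singletons are optimal, then use irrevocability together with the strict gap $\mathsf{max}-\mathsf{min}<\mathsf{min}$ to show every arriving greedy player prefers to form a new singleton. The only cosmetic difference is that the paper structures the second step as an explicit induction on $t$ (with $S$ already a singleton by the induction hypothesis), whereas you derive the marginal-contribution ceiling $\varphi_{\pi_t}(S\cup\{\pi_t\})\le v(S\cup\{\pi_t\})-v(S)$ for arbitrary $S$ from Proposition~\ref{prop:irr-def} and non-wastefulness, which makes the induction unnecessary.
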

\begin{proof}
    By Lemma~\ref{lem:ssize}, we know that all players being singletons will be a coalition structure that maximizes social welfare, i.e., $C^* = \{\{i\}_{i\in N}\}$. Then, we will show that the coalition structure $C_g^t$ formed by greedy players at any time $t$ for any order $\pi$ is singletons by mathematical induction.
    \begin{itemize}
        \item \textbf{Base Case.} For time $t = 1$, the first player has no choice but to form a new singleton coalition for herself.
        \item \textbf{Induction Step.} Suppose the statement is true for any $t\leq k$. Then, consider the case of $t=k+1$. Let $\pi_{t} = j$. For any singleton $\{i\}$ formed by the first $t-1$ players, if player $j$ joins the coalition, by irrevocability, she can get at most $\mathsf{max} - v(\{i\})\leq \mathsf{max} - \mathsf{min} < \mathsf{min}$ at the time she joins. However, if she forms a singleton, she can get at least $v(\{j\})\geq \mathsf{min}$. Hence, player $j$ will also form a new coalition.
    \end{itemize}
    Then, we conclude that the coalition structure formed by greedy players, in this case, is also a collection of singletons. Therefore, the competitive ratio is $\alpha = 1$.
\end{proof}

Next, we consider an upper bound for the class $V_{\geq 3}$.

\begin{theorem}\label{thm:irrv3}
    For any irrevocable distribution policy, when the characteristic function $v\in V_{\geq 3}$, its competitive ratio cannot exceed $\frac{3\mathsf{min}}{\mathsf{max}}$.
\end{theorem}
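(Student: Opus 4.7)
The plan is to build a single adversarial instance that defeats every irrevocable policy simultaneously. I take $n=3$, fix any arrival order (say $\pi=(a_1,a_2,a_3)$), and design the characteristic function so that every proper sub-coalition is worth only $\mathsf{min}$ while the grand coalition is worth $\mathsf{max}$: concretely, $v(\{a_i\})=v(\{a_i,a_j\})=\mathsf{min}$ for all $i,j$ and $v(\{a_1,a_2,a_3\})=\mathsf{max}$. Because $\mathsf{max}\ge 3\mathsf{min}$ in $V_{\geq 3}$ and the function is monotone, this $v$ lies in the class and is thus a legitimate instance; crucially the same $v$ is used against every policy, so I do not need an adaptive adversary.

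Next I would argue that greedy players are driven into singletons regardless of how the policy distributes value. Player $a_1$ has no option and receives $\varphi_{a_1}(\{a_1\})=\mathsf{min}$ by non-wastefulness. When $a_2$ contemplates joining $\{a_1\}$, irrevocability (Proposition~\ref{prop:irr-def}) forces $\varphi_{a_1}(\{a_1,a_2\})\ge\mathsf{min}$, and non-wastefulness then yields $\varphi_{a_2}(\{a_1,a_2\})\le v(\{a_1,a_2\})-\mathsf{min}=0$; forming her own singleton earns $\mathsf{min}>0$, so she strictly prefers to do so. The same inequality applies whether $a_3$ considers joining $\{a_1\}$ or $\{a_2\}$, so $a_3$ also opts for a singleton. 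Hence $C_g=\{\{a_1\},\{a_2\},\{a_3\}\}$ with $\mathsf{SW}(C_g)=3\mathsf{min}$.

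The optimum is easy to pin down: the alternative partitions produce $2\mathsf{min}$ (one pair plus one singleton) or $3\mathsf{min}$ (all singletons), both dominated by $v(\{a_1,a_2,a_3\})=\mathsf{max}\ge 3\mathsf{min}$. Thus $\mathsf{SW}(C^*)=\mathsf{max}$ and the competitive ratio on this instance is exactly $3\mathsf{min}/\mathsf{max}$, establishing the claimed upper bound for every irrevocable $\varphi$.

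The step that carries the weight is the joint force of irrevocability and non-wastefulness on a ``flat'' $v$: once $a_1$ is irrevocably guaranteed $\mathsf{min}$, any coalition whose total value is still only $\mathsf{min}$ cannot compensate a newcomer, so the adversary only needs to hold every proper sub-coalition at $\mathsf{min}$ and let the full set jump to $\mathsf{max}$. I do not foresee any real obstacles beyond a minor tie-breaking point, namely that $a_2$ compares $\mathsf{min}$ versus $0$ which is strict since $\mathsf{min}>0$; any residual degeneracy is absorbed by the perturbation convention introduced alongside the definition of greedy players.
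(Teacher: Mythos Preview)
Your proof is correct and follows essentially the same approach as the paper: a three-player adversarial instance in which proper sub-coalitions are too flat for any irrevocable policy to compensate a newcomer, forcing singletons and yielding $3\mathsf{min}$ against the grand coalition's $\mathsf{max}$. Your instance is in fact a slight simplification of the paper's---you set \emph{all} pairs to $\mathsf{min}$, whereas the paper sets $v(\{x,z\})=v(\{y,z\})=2\mathsf{min}$ so that the third player's choice becomes irrelevant rather than forced---but the core idea and the resulting bound are identical.
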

\begin{proof}
    Since we are analyzing the worst-case competitive ratio, we can provide a single counterexample to establish an upper bound.
    Consider an instance with three players, $N = \{x, y, z\}$, an arrival order $\pi = (x, y, z)$, and the characteristic function $v \in V_{\geq 3}$:
    \begin{align*}
    & v(\{x\}) = \mathsf{min}, \quad v(\{y\}) = \mathsf{min}, \quad v(\{z\}) = \mathsf{min}, \quad v(\{x,y,z\}) = \mathsf{max}, \\
    & v(\{x,y\}) = \mathsf{min}, \qquad v(\{x,z\}) = 2\mathsf{min}, \qquad v(\{y,z\}) = 2\mathsf{min}. 
    \end{align*}

    The first player, $x$, arrives and forms a singleton. When $y$ arrives, since we focus on irrevocable policies, joining $x$ yields no gain, since $v(\{x,y\}) = \mathsf{min} = v(\{x\})$. Therefore, $y$ must also form a singleton. When $z$ arrives, joining either $x$ or $y$ results in a coalition worth $2\mathsf{min}$, which is still no better than forming a singleton, so that all options result in a total welfare of $3\mathsf{min}$. However, the optimal social welfare, which is achieved by forming the grand coalition $N$, is $\mathsf{max} \geq 3\mathsf{min}$. Thus, no irrevocable distribution policy can guarantee a competitive ratio better than $\frac{3\mathsf{min}}{\mathsf{max}}$. 
\end{proof}

Note that non-anticipation is not required above. Later in this section, we will present an irrevocable and non-anticipative policy with a competitive ratio close to this bound. This suggests that being aware of the contributions of players in the future does not help improve the competitive ratio for irrevocable policies in the class $V_{\geq 3}$. Intuitively, it is because for greedy players whose actions are solely determined by immediate rewards, knowing about future players sometimes cannot affect their decisions.
However, when the characteristic functions belong to class $V_2$, being aware of future information may help: 
if we relax the requirement of non-anticipation, then there exists a distribution policy with a competitive ratio of $1$. 

\begin{theorem}\label{thm:irrv2}
    For characteristic functions $v\in V_{2}$, there exists an irrevocable distribution policy that has a competitive ratio $\alpha = 1$.
\end{theorem}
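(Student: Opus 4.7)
The plan is to leverage the fact that non-anticipation is not required: the policy designer can precompute an optimal coalition structure $C^* \in \arg\max_C \mathsf{SW}(C\mid v)$ and use it to steer the arrivals. By Lemma~\ref{lem:ssize}, each $S\in C^*$ has $|S|\leq 2$, so $C^*$ consists of singletons and pairs; I will call two players \emph{partners} when they sit in a common pair of $C^*$. The aim of the policy is to make joining one's partner weakly optimal for a greedy arrival, while making every other way of joining an existing coalition strictly suboptimal.

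I define the policy as follows. When player $\pi_t$ joins an already-existing (possibly empty) coalition $S$, write $T = S\cup\{\pi_t\}$ and $\Delta = v(T)-v(S)\geq 0$. If $T\in C^*$ (in particular when $|T|=1$), set $\varphi_{\pi_t}(T,\pi_{|T}) = \Delta$ and leave every $\varphi_i(S,\pi_{|S})$ unchanged for $i\in S$; otherwise set $\varphi_{\pi_t}(T,\pi_{|T}) = 0$ and credit the full $\Delta$ to the initiator (first arrival) of $S$. Non-wastefulness holds by a telescoping sum over successive joinings, and irrevocability holds because each update either keeps an existing share fixed or augments it by the non-negative quantity $\Delta$.

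The competitive-ratio argument is an induction on $t$ with the invariant: there exists an optimal coalition structure $C^{*,t}$ such that every coalition in $C_g^t$ either belongs to $C^{*,t}$ or is a singleton whose $C^{*,t}$-partner has not yet arrived. Under this invariant, when $\pi_{t+1}$ arrives, any existing coalition other than a singleton consisting of $\pi_{t+1}$'s partner fails the condition $S\cup\{\pi_{t+1}\}\in C^{*,t}$, so any such wrong-join pays $0$; forming a new singleton pays $v(\{\pi_{t+1}\})\geq\mathsf{min}>0$; and if $\pi_{t+1}$'s partner $p$ is already present as the singleton $\{p\}$, joining it pays $v(\{p,\pi_{t+1}\})-v(\{p\})$, which is at least $v(\{\pi_{t+1}\})$ by the optimality of $C^{*,t}$. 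Hence $\pi_{t+1}$ never wrong-joins and either forms a singleton or joins her partner, both of which keep the invariant alive.

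The delicate point, which I expect to be the main obstacle, is the equality case $v(\{p,\pi_{t+1}\})=v(\{p\})+v(\{\pi_{t+1}\})$, where adversarial tie-breaking may have $\pi_{t+1}$ form a singleton rather than join $p$. This dissolves cleanly: in that case, replacing the pair $\{p,\pi_{t+1}\}$ in $C^{*,t}$ by the two singletons $\{p\},\{\pi_{t+1}\}$ changes the social welfare by $0$ and thus produces another optimal structure $C^{*,t+1}$, and the invariant is preserved with this updated witness. Iterating to $t=n$, the final coalition structure $C_g$ satisfies $\mathsf{SW}(C_g)=\mathsf{SW}(C^{*,n})=\mathsf{SW}(C^*)$, yielding $\alpha=1$.
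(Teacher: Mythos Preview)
Your proposal is correct and takes essentially the same approach as the paper: precompute an optimal $C^*$ (necessarily singletons and pairs by Lemma~\ref{lem:ssize}), pay a newcomer her full marginal contribution only when she joins her $C^*$-partner or opens a singleton, and pay $0$ otherwise. Your treatment is in fact slightly more careful than the paper's, since you explicitly absorb the tie case $v(\{p,\pi_{t+1}\})=v(\{p\})+v(\{\pi_{t+1}\})$ by passing to an equally optimal $C^{*,t+1}$; just note the minor slip that the ``wrong-join pays $0$'' test is against the fixed $C^*$ used by the policy, not $C^{*,t}$, though the two coincide on $\pi_{t+1}$'s pair so the conclusion is unaffected.
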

\begin{proof}
    By Lemma~\ref{lem:ssize}, any coalition structure $C^*$ that maximizes the social welfare for a characteristic function $v\in V_{2}$ only contains singletons and pairs. Now, consider the following distribution policy:
    \begin{itemize}
        \item Choose a coalition structure $C^*$ that maximizes social welfare.
        \item For any time $t$, if the player $\pi_t$ joins with an existing coalition $S\neq \emptyset$ such that $S\cup\{\pi_t\}\notin C^*$, then her marginal contribution will be distributed to the last player before her $S$, i.e., she receives nothing; if $S\cup\{\pi_t\}\in C^*$ or she creates a new coalition, then she will receive all her marginal contribution.
    \end{itemize}
    With this policy, $C_g$ will always be $C^*$, and hence $\alpha = 1$. (i) Consider a singleton $\{i\}\in C^*$, she will never join with others since she will get nothing, and other players will also not join with her as they will get nothing. (ii) Consider a pair of players $\{i,j\}\in C^*$, and w.l.o.g, assume $\pi^{-1}(i) < \pi^{-1}(j)$. Any other player will not join with $i$ as she will get nothing. Provided player $i$ is still a singleton when player $j$ comes, then joining with player $i$ is the best choice: joining with other player will get nothing, and being alone cannot be better because $v(\{i,j\})\geq v(\{i\}) + v(\{j\})$ leads to $v(\{i,j\}) - v(\{i\})\geq v(\{j\})$. 
\end{proof}

The key difference that occurs in the class of $V_2$ is that players have no ``hidden contributions'', which are only revealed when the coalition is sufficiently large, allowing us to force them to form the ``correct'' pairs. However, this can only be done by knowing what the ``correct'' pairs look like. Therefore, with the requirement of non-anticipation, we can no longer achieve this.


\subsection{Allocating Marginal Contributions}
Now we are going to show specific distribution policies. First of all, one of the most direct ways to construct an irrevocable policy is simply allocating the marginal contribution to each newly joined player.

\begin{definition}[Allocating Marginal Contributions (AMC)]
    A distribution policy $\varphi$ is \emph{allocating marginal contributions} if it satisfies for each $i\in S$,
    $ \varphi_i(S, \pi_{|S}) = v(\pi_{\prec\pi^{-1}(i)}\cap S\cup\{i\}) - v(\pi_{\prec\pi^{-1}(i)}\cap S)$.
\end{definition}

Clearly, AMC policy is also non-anticipative. It is easy to observe that it also satisfies TNS, as in the following proposition. 

\begin{proposition}\label{prop:amcNs}
    AMC policy is temporal Nash stable.
\end{proposition}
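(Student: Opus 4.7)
The plan is to use the greedy behavior at time $t := \pi^{-1}(i)$ as the sole lever: because AMC pays every player exactly her marginal contribution at the moment she joins, a greedy pick among the coalitions available at time $t$ is literally a pick that maximizes the AMC payoff, so TNS will follow almost mechanically once the bookkeeping is set up.

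First, I would unpack the AMC formula. For any coalition $R$ containing $i$ with sub-order $\pi_{|R}$, the definition gives
\[
 \varphi_i(R,\pi_{|R}) \;=\; v\bigl((\pi_{\prec t}\cap R)\cup\{i\}\bigr) - v(\pi_{\prec t}\cap R).
\]
Applying this with $R = S$ and with $R = S'\cup\{i\}$ for an arbitrary $S'\in \overline{C}_g^{\,t-1}\cup\{\emptyset\}$ yields
\[
 \varphi_i(S,\pi_{|S}) = v(T\cup\{i\}) - v(T), \qquad \varphi_i(S'\cup\{i\},\pi_{|S'\cup\{i\}}) = v(T'\cup\{i\}) - v(T'),
\]
where $T := S\cap\pi_{\prec t}$ and $T' := S'\cap\pi_{\prec t}$.

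Next, I would argue that both $T$ and $T'$ are elements of $C_g^{\,t-1}\cup\{\emptyset\}$. Since $S\in C_g$ is $i$'s final coalition and $i$ arrived at time $t$, Assumption~\ref{asum:one-time-decision} (players never switch) forces the group $i$ actually joined to be exactly $T=S\cap\pi_{\prec t}\in C_g^{\,t-1}\cup\{\emptyset\}$. For $T'$, the hypothesis $S'\in\overline{C}_g^{\,t-1}$ says that some member of $S'$ has already arrived by time $t-1$; by the same irrevocability, the ``trace'' of the final coalition $S'$ inside $C_g^{\,t-1}$ is uniquely $S'\cap\pi_{\prec t}=T'$. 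The case $S'=\emptyset$ corresponds to the singleton option and is handled by letting $T'=\emptyset$ with $v(\emptyset)=0$ (or more cleanly by a separate one-line check that $\varphi_i(\{i\}) = v(\{i\})$).

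Finally, Assumption~\ref{asum:greedy} says that $T$ was chosen from $C_g^{\,t-1}\cup\{\emptyset\}$ to maximize $v(T\cup\{i\})-v(T)$. Since $T'$ lies in this same set, we immediately get
\[
 \varphi_i(S,\pi_{|S}) = v(T\cup\{i\}) - v(T) \;\geq\; v(T'\cup\{i\}) - v(T') = \varphi_i(S'\cup\{i\},\pi_{|S'\cup\{i\}}),
\]
ruling out any profitable deviation and establishing TNS. The only genuine subtlety, and the step I would write most carefully, is the identification $T' = S'\cap\pi_{\prec t}$: I need to be explicit that coalitions in this model only grow and never split, so the state of any final coalition $S'$ at time $t-1$ is determined purely by which of its eventual members have arrived. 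Everything else is an unrolling of definitions, so I do not anticipate any real obstacle.
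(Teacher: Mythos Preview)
Your proof is correct and follows essentially the same approach as the paper: under AMC, $\varphi_i$ is fixed at the moment of arrival and depends only on the players already present, so the greedy choice at arrival time already dominates every deviation TNS considers. The paper compresses this to a single sentence (``the utility of any player will never change after she has already joined''), while you carefully spell out the identification $T' = S'\cap\pi_{\prec t}\in C_g^{t-1}$ that makes this precise.
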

\begin{proof}
    Since the utility of any player will never change after she has already joined, 
    then the policy is naturally TNS by definition.
\end{proof}

On the other hand, we also observe that a coalition formed by greedy players under this policy has a bounded size similar to an optimal one. As a result, this simple policy has a competitive ratio of $\alpha = \frac{2\mathsf{min}}{\mathsf{max}}$ in the class $V_{\geq 2}$.

\begin{lemma}\label{lem:amcsize}
    With AMC policy, the coalition structure $C_g$ formed by greedy players under any characteristic function $v\in V_{\delta}$ and any arrival order $\pi$ must satisfy that for any $S\in C_g$, $|S|\leq \delta$.
\end{lemma}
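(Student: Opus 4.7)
The plan is to argue by contradiction: suppose some $S \in C_g$ contains at least $\delta + 1$ players, then use greediness to obtain a lower bound of $(\delta + 1)\mathsf{min}$ on $v(S)$, and finally invoke Assumption~\ref{asum:boundedv} together with the defining inequality of $V_\delta$ (namely $\mathsf{max} < (\delta + 1)\mathsf{min}$) to reach a contradiction.

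First I would enumerate the members of $S$ in their arrival order as $i_1, i_2, \ldots, i_k$ with $k = |S| \geq \delta + 1$, so that at the time $i_j$ arrived, the relevant existing coalition was precisely $S_{j-1} := \{i_1, \ldots, i_{j-1}\}$ (with $S_0 = \emptyset$). The key observation is that under AMC, forming a new singleton coalition always guarantees player $i_j$ a payoff of $v(\{i_j\}) \geq \mathsf{min}$. Since $i_j$ is greedy and actually chose to join $S_{j-1}$ (or, for $j = 1$, to form a singleton), her received marginal contribution satisfies
\[
v(S_j) - v(S_{j-1}) \;\geq\; v(\{i_j\}) \;\geq\; \mathsf{min}.
\]
This inequality holds trivially for $j = 1$ since $v(S_1) = v(\{i_1\}) \geq \mathsf{min}$, and for $j \geq 2$ it follows because otherwise the greedy player $i_j$ would have strictly preferred the singleton option at her arrival time.

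Next I would telescope: summing the above bound over $j = 1, \ldots, k$ gives
\[
v(S) \;=\; \sum_{j=1}^{k} \bigl[v(S_j) - v(S_{j-1})\bigr] \;\geq\; k \cdot \mathsf{min} \;\geq\; (\delta + 1)\,\mathsf{min} \;>\; \mathsf{max},
\]
where the final strict inequality uses the definition of $V_\delta$. This contradicts $v(S) \leq \mathsf{max}$ from Assumption~\ref{asum:boundedv}, completing the proof.

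There is no real obstacle here; the only subtlety to flag is that the argument relies on the fact that under AMC the singleton option is \emph{always available and always pays exactly $v(\{i_j\})$}, so greediness translates cleanly into a per-step lower bound of $\mathsf{min}$ on marginal contributions. Everything else is telescoping plus the characterization of $V_\delta$. I would state the contradiction explicitly to make the parallel with Lemma~\ref{lem:ssize} transparent, since both proofs exploit the same ``split-into-singletons dominates oversized coalitions'' intuition, with Lemma~\ref{lem:ssize} applying it to $C^*$ and the present lemma applying it to the greedy outcome $C_g$.
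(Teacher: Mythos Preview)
Your proposal is correct and follows essentially the same approach as the paper: establish that each greedy player's marginal contribution under AMC is at least $\mathsf{min}$ (since the singleton option is always available), sum to get $v(S)\geq |S|\cdot\mathsf{min}$, and derive a contradiction with $\mathsf{max}<(\delta+1)\mathsf{min}$ when $|S|\geq\delta+1$. The paper's proof is just a slightly more terse version of exactly this argument.
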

\begin{proof}
    We first observe that for $S\in C^t$ at any time $t$, we must have $v(S)\geq |S|\cdot\mathsf{min}$. This is because, for each player in $S$, the marginal contribution she brought is no less than $\mathsf{min}$; otherwise, she can choose to form a new coalition. Next, we show the statement by contradiction. Suppose that there exists a coalition $S\in C_g$ such that $|S|>\delta$. Then, $v(S)\geq |S|\cdot\mathsf{min}\geq (\delta+1) \cdot \mathsf{min}$, which contradicts to the assumption that $v\in V_{\delta}$. Therefore, there cannot be a coalition $S\in C_g$ such that $|S|>\delta$, so that the statement is true.
\end{proof}

\begin{theorem}\label{thm:amclb}
    AMC policy has a competitive ratio of $\alpha = \frac{2\mathsf{min}}{\mathsf{max}}$ when the characteristic function belonging to $V_{\geq 2}$.
\end{theorem}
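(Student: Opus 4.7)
The plan is to prove the bound $\mathsf{SW}(C_g) \geq \tfrac{2\mathsf{min}}{\mathsf{max}}\,\mathsf{SW}(C^*)$ by a coalition-by-coalition comparison against the optimal partition $C^*$, for arbitrary $v \in V_{\geq 2}$ and arrival order $\pi$. The central ingredient is a player-wise lower bound: under AMC, every player $i$ is allocated $\varphi_i \geq v(\{i\}) \geq \mathsf{min}$. The reason is that at her arrival time, player $i$ may always create a new singleton, which AMC would pay her exactly $v(\{i\})$; greediness then forces her realized marginal allocation to be at least this much, and AMC never revisits it thereafter.

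With this in hand, I would establish that for each $S^* \in C^*$,
\[
\sum_{i \in S^*} \varphi_i \;\geq\; \frac{2\mathsf{min}}{\mathsf{max}}\, v(S^*),
\]
by case analysis on $|S^*|$. If $S^* = \{i\}$ is a singleton, the player-wise bound yields $\varphi_i \geq v(\{i\}) = v(S^*)$, and the claim follows from $\mathsf{max} \geq 2\mathsf{min}$ (which holds throughout $V_{\geq 2}$). If $|S^*| \geq 2$, summing the player-wise bound gives $\sum_{i \in S^*}\varphi_i \geq |S^*|\mathsf{min} \geq 2\mathsf{min}$, and combining with $v(S^*) \leq \mathsf{max}$ gives the claimed ratio.

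Summing this per-coalition inequality over $S^* \in C^*$, and using non-wastefulness of AMC (so that $\sum_{i\in N}\varphi_i = \sum_{S \in C_g}v(S) = \mathsf{SW}(C_g)$), finishes the proof:
\[
\mathsf{SW}(C_g) \;=\; \sum_{S^* \in C^*}\,\sum_{i \in S^*} \varphi_i \;\geq\; \frac{2\mathsf{min}}{\mathsf{max}} \sum_{S^* \in C^*} v(S^*) \;=\; \frac{2\mathsf{min}}{\mathsf{max}}\,\mathsf{SW}(C^*).
\]
The main obstacle I foresee is precisely the singleton case of the per-coalition bound. Singletons in $C^*$ can carry value arbitrarily close to $\mathsf{max}$, so a naive ``each player contributes at least $\mathsf{min}$'' estimate would only give ratio $\tfrac{\mathsf{min}}{\mathsf{max}}$ for them, losing a factor of two. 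The sharper bound $\varphi_i \geq v(\{i\})$ is exactly what is needed to absorb singletons at ratio $1$, so that the weaker factor $\tfrac{2\mathsf{min}}{\mathsf{max}}$ only has to cover non-singleton optimal coalitions, where $|S^*|\mathsf{min} \geq 2\mathsf{min}$ already suffices. Once this decomposition is in place, nothing beyond elementary algebra remains; Lemma~\ref{lem:ssize} plays no role in the lower bound itself but could be used to exhibit a matching worst-case instance in which $C^*$ consists entirely of value-$\mathsf{max}$ pairs while greedy is forced into singletons by appropriately chosen ``distractor'' values on the remaining pairs.
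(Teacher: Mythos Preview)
Your lower-bound argument is correct and is considerably more direct than the paper's. The paper does not prove Theorem~\ref{thm:amclb} on its own; it simply declares the result a corollary of Theorems~\ref{thm:amchlb2} and~\ref{thm:amchlb3}, which analyze the general AMC-$h$ family via an involved case decomposition (splitting $C_g$ and $C^*$ into minimal matched sub-partitions and bounding each piece). Specializing those formulas to $h=0$ yields $\alpha \geq \frac{2\mathsf{min}}{\mathsf{max}}$. Your route bypasses all of that machinery: individual rationality (Proposition~\ref{prop:ir}) already gives $\varphi_i \geq v(\{i\})$, and your per-$S^*$ inequality then follows in two lines, with non-wastefulness tying the sum back to $\mathsf{SW}(C_g)$. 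What the paper's approach buys is uniformity across the whole $h$-family; what yours buys is a clean self-contained argument for the $h=0$ case that does not even need Lemma~\ref{lem:amcsize}.

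One small inaccuracy in your tightness sketch: greedy cannot be ``forced into singletons'' while $C^*$ consists of $\mathsf{max}$-valued pairs, because under AMC the second member of any such pair, upon arrival, would see a marginal of at least $\mathsf{max}-\mathsf{min} \geq \mathsf{min}$ and join. The paper's tight instance (Example~\ref{ex:amcbound}) instead lures greedy into the \emph{wrong} coalitions---distractor coalitions worth roughly $\delta\cdot\mathsf{min}$---so that the $\mathsf{max}$-valued pairs in $C^*$ never get a chance to form. You will need that kind of construction (wrong coalitions, not singletons) to finish the equality claim.
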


Theorem~\ref{thm:amclb} is a direct corollary from Theorem~\ref{thm:amchlb2}  and~\ref{thm:amchlb3}, so we omit the proof here. Following, we give an example where the ratio is exactly $\frac{2\mathsf{min}}{\mathsf{max}}$. 

\begin{example}\label{ex:amcbound}
    Consider the following characteristic function $v\in V_\delta$ with $n=2m\delta$ players such that $v(\{a_i\})=\mathsf{min}$ for any $i\in N$, $v(\{a_i,a_{i+1},\dots,a_{i+k}\}) - v(\{a_i,a_{i+1},\dots,a_{i+k-1}\})=\mathsf{min} + \epsilon$ with $\epsilon\to 0$ for any $k<\delta$, and $v(\{a_i,a_{i+\delta}\}) = \mathsf{max}$. Let $\pi = (a_1,a_2,\dots,a_n)$. Then, for $1< t\leq \delta$, the player $\pi_t$ will always join the existing coalition; for $t=\delta + 1$, the player $\pi_{t+1}$ has a marginal contribution to the existing coalition be less than $\mathsf{max}-\delta\cdot\mathsf{min}<\mathsf{min}$, so that she will form a new coalition. The following players will do similar things as the previous players and finally form $m$ coalitions with the value of about $\delta\cdot\mathsf{min}$. However, the best coalition structure that maximizes social welfare is to have pairs that can create a value of $\mathsf{max}$. Therefore, in this instance, AMC policy has a competitive ratio of $\alpha = \frac{n\cdot\mathsf{min}}{\frac{n}{2}\cdot\mathsf{max}} = \frac{2\mathsf{min}}{\mathsf{max}}$.
\end{example}

\subsection{AMC beyond Threshold}
We observe that in the worst-case scenario for the AMC policy described above, players with smaller marginal contributions who arrive earlier may occupy the positions of better partners. Inspired by this observation, we now consider setting a threshold to prevent players with insufficient marginal contributions from joining an existing coalition. To achieve this for greedy players using an irrevocable policy, we propose the following class of distribution methods.

\begin{definition}[Allocating Marginal Contributions beyond Threshold $h$ (AMC-$h$)]
    A distribution policy $\varphi$ is \emph{allocating marginal contributions beyond threshold $h$} if it satisfies for each $i\in S$, denoting the marginal contribution when she joins as
    \[ \mathsf{MC}_i(S, \pi_{|S}) = v(\pi_{\prec\pi^{-1}(i)}\cap S\cup\{i\}) - v(\pi_{\prec\pi^{-1}(i)}\cap S), \]
    then $\varphi$ updates the value distribution as follows:
    \begin{itemize}
        \item If $\mathsf{MC}_i(S, \pi_{|S})\leq h$, then all these marginal contributions will be allocated to the last player before $i$ in $S$;
        \item If $\mathsf{MC}_i(S, \pi_{|S}) > h$, then the last player before $i$ in $S$ will additionally receive $h$ from the marginal contribution, and $i$ herself will get the remaining.
        \item Specially, when $\pi_{\prec\pi^{-1}(i)}\cap S = \emptyset$, i.e., $i$ is the first player who forms $S$, then the last player before $i$ is treated as herself.
    \end{itemize}
\end{definition}

Clearly, AMC-$h$ policy is also irrevocable and non-anticipative by its definition. With this policy, greedy players will first find an existing coalition $S$ that maximizes $\max\{v(S\cup\{i\}) - v(S) - h, 0\}$, and compare the value with $v(\{i\})$: if the former is larger, she will join that coalition; otherwise, she will form a new coalition. The final value distribution in a coalition follows the AMC policy, except that the first player additionally receives $h$ from the last player as a bonus for initiating a new coalition that others would like to join.

With the introduction of the threshold, the policy is no longer TNS necessarily. More precisely, it maintains the property of TNS only when it coincides with AMC policy.

\begin{proposition}\label{prop:h-tns}
    AMC-$h$ policy is TNS if and only if $h = 0$.
\end{proposition}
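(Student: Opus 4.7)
The plan is to handle the two directions of the biconditional separately. The ``if'' direction is immediate: when $h = 0$ no bonus is ever transferred from a joining player to the previous last-present member, so AMC-$h$ coincides exactly with the plain AMC policy, and temporal Nash stability follows at once from Proposition~\ref{prop:amcNs}.

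For the ``only if'' direction I would construct a small counterexample showing that whenever $h \neq 0$ some greedy player ends up with a strictly larger counterfactual utility in another coalition. For $h > 0$ the key observation is the asymmetry of the $h$ bonus under AMC-$h$: on entry a player immediately receives her marginal contribution minus $h$, but she collects an additional $h$ later whenever any subsequent player joins her coalition with marginal contribution exceeding $h$. Hence two candidate coalitions can rank in opposite orders under the immediate-utility comparison used by a greedy newcomer and under the final utility that appears in the definition of TNS.

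Concretely, I would take four players $a, b, c, d$ arriving in this order with all singleton values equal to $\mathsf{min}$ and $v(\{a, b\}) = \mathsf{min}$, so that $b$ greedily forms her own singleton. Writing $\alpha := v(\{a, c\}) - v(\{a\})$ and $\beta := v(\{b, c\}) - v(\{b\})$, I would tune the instance so that $\alpha > \beta$ and $\alpha - h > \mathsf{min}$ (so greedy $c$ chooses $\{a\}$ over $\{b\}$ and over going alone), but simultaneously $\alpha - h < \beta$. I would then set $v(\{a, c, d\}) = v(\{a, c\})$ so that $d$ prefers $\{b\}$ to $\{a, c\}$, and $v(\{b, c, d\}) - v(\{b, c\}) > h$ so that $d$ joining $\{b, c\}$ confers the $h$ bonus on $c$. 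In the actual greedy play $c$ ends with final utility $\alpha - h$, whereas in the counterfactual $S' = \{b, d\}$ her utility becomes $\beta - h + h = \beta > \alpha - h$, violating TNS. The mirror case $h < 0$ admits a simpler two-player example: greedy $b$ accepts the attractive joining bonus, which transfers $|h|$ away from $a$, so $a$ ends with $v(\{a\}) + h < v(\{a\})$, while her only alternative (the singleton counterfactual $S' = \emptyset$) gives exactly $v(\{a\})$.

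The main obstacle is the simultaneous calibration of all four marginal contributions and the auxiliary coalition values so that (i) $a$ and $b$ both greedily form singletons, (ii) $c$ greedily prefers $\{a\}$ to $\{b\}$, (iii) $d$ greedily prefers $\{b\}$ to $\{a, c\}$ in the actual play, and (iv) the resulting $v$ is monotone and lies in $[\mathsf{min}, \mathsf{max}]$. Once the inequalities $\alpha - h < \beta < \alpha$ and $\alpha > \mathsf{min} + h$ are identified, a choice such as $\alpha = \mathsf{min} + h + h/4$ and $\beta = \mathsf{min} + 3h/4$, together with a sufficiently large $\mathsf{max}$, yields a concrete valid instance, and verifying monotonicity on the handful of relevant subsets is then a routine final check.
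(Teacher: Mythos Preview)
Your proof is correct and relies on exactly the same insight as the paper: under AMC-$h$ a player who is last in her coalition earns $\mathsf{MC}-h$, whereas in a counterfactual coalition where she is followed by someone she recovers the bonus and ends with $\mathsf{MC}'$; engineering $\mathsf{MC}'>\mathsf{MC}-h$ with $\mathsf{MC}'\le\mathsf{MC}$ forces $h\le 0$. The paper states this as a two-line abstract argument without writing down an instance, while you build the explicit four-player witness, so you are simply filling in the details the paper leaves implicit.

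One small omission in your calibration list: condition~(iii) only requires $d$ to prefer $\{b\}$ over $\{a,c\}$, but for the counterfactual $S'=\{b,d\}$ to belong to $\overline{C}_g^{2}$ you also need $d$ to prefer $\{b\}$ over forming a singleton, i.e.\ $v(\{b,d\})-v(\{b\})-h>\mathsf{min}$. This is trivial to arrange alongside your other choices, but should be listed. Your separate treatment of $h<0$ is fine, though the paper (consistent with the ``threshold'' terminology) implicitly restricts to $h\ge 0$.
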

\begin{proof}
    For the sufficient side, the statement is true due to Proposition~\ref{prop:amcNs}. We only need to show that $h=0$ is also a necessary condition. If an AMC-$h$ policy is TNS, then consider a player who is the last one coming in the coalition she belongs to, but not the last player in $\pi$. That means, if she did not join the current coalition, there exists a case where she will not be the last one in another coalition. Let the player's utility be $\mathsf{MC} - h$, and the utility in the second case is $\mathsf{MC}' - h + h = \mathsf{MC}'$, with $\mathsf{MC}'\leq \mathsf{MC}$. Since the policy is TNS, then for any $\mathsf{MC}'\leq \mathsf{MC}$, $\mathsf{MC} - h\geq \mathsf{MC}'$ should be satisfied. Therefore, it implies that $h$ can only be $0$.
\end{proof}

For the competitive ratio of the AMC-$h$ policy, we will see that it differs between the class $V_2$ and the class $V_{\geq 3}$. Before illustrating these differences, we first give a generalization of Lemma~\ref{lem:amcsize}.

\begin{lemma}\label{lem:amchsize}
    With AMC-$h$ policy, the coalition structure $C_g$ formed by greedy players under any characteristic function $v\in V_\delta$ and any arrival order $\pi$ must satisfies that for any $S\in C_g$, $|S|\leq\left\lceil \frac{\delta\cdot\mathsf{min}}{\mathsf{min} + h} \right\rceil$.
\end{lemma}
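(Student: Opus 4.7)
The plan is to mirror the proof of Lemma~\ref{lem:amcsize}, but upgrade the per-player lower bound on marginal contributions from $\mathsf{min}$ to $\mathsf{min}+h$ for every non-founder of a greedy coalition. The central observation I would establish first is that, under AMC-$h$, any greedy player who joins an existing coalition rather than forming her own must bring in a marginal contribution of at least $\mathsf{min}+h$; otherwise her immediate utility (evaluated ``as if she were the last player'') would be strictly smaller than the $v(\{i\})\geq\mathsf{min}$ she is guaranteed by forming a singleton.

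Next, I would fix any $S\in C_g$, list its members $i_1,\dots,i_{|S|}$ in arrival order, and bound $v(S)$ from below by telescoping the marginal contributions. The founder $i_1$ contributes $\mathsf{MC}_{i_1}=v(\{i_1\})\geq \mathsf{min}$; by the previous observation each subsequent $i_k$ contributes at least $\mathsf{min}+h$. This yields
\[ v(S)=\sum_{k=1}^{|S|}\mathsf{MC}_{i_k}\;\geq\; \mathsf{min}+(|S|-1)(\mathsf{min}+h). \]
Combining this with $v(S)\leq \mathsf{max}<(\delta+1)\mathsf{min}$, which follows from $v\in V_\delta$, gives $(|S|-1)(\mathsf{min}+h)<\delta\cdot\mathsf{min}$, and integrality of $|S|$ upgrades the strict inequality to $|S|\leq \left\lceil \tfrac{\delta\cdot\mathsf{min}}{\mathsf{min}+h}\right\rceil$, as claimed. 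A sanity check at $h=0$ recovers exactly Lemma~\ref{lem:amcsize}.

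The only part requiring a bit of care is the ``marginal contribution at least $\mathsf{min}+h$'' claim for non-founders, which silently handles two subcases under the AMC-$h$ rule. If $\mathsf{MC}_{i_k}\leq h$ then player $i_k$ would receive utility $0$, strictly worse than $v(\{i_k\})\geq \mathsf{min}>0$, so greediness rules this case out. Hence necessarily $\mathsf{MC}_{i_k}>h$ and her utility equals $\mathsf{MC}_{i_k}-h$, which the greedy comparison against the always-available singleton option (whose utility is $v(\{i_k\})$) pins down to satisfy $\mathsf{MC}_{i_k}-h\geq v(\{i_k\})\geq \mathsf{min}$. I do not anticipate a real obstacle beyond this little case analysis; the rest is the same accounting as in Lemma~\ref{lem:amcsize}, reweighted by the bonus $h$.
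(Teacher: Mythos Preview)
Your proposal is correct and follows essentially the same approach as the paper: both establish that every non-founder of a greedy coalition under AMC-$h$ contributes at least $\mathsf{min}+h$, telescope to get $v(S)\geq \mathsf{min}+(|S|-1)(\mathsf{min}+h)$, and combine with $v(S)<(\delta+1)\mathsf{min}$ to bound $|S|$. Your explicit case split on $\mathsf{MC}_{i_k}\le h$ versus $\mathsf{MC}_{i_k}>h$ is a nice clarification the paper leaves implicit, and your direct integrality step is equivalent to the paper's contradiction argument.
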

\begin{proof}
    To show the statement, we first observe that for any $S\in C^t$ at any time $t$, we must have $v(S)\geq \mathsf{min} + (|S|-1)(\mathsf{min} + h)$. This is because for each player in $S$ except for the first one, the marginal contribution she brought is no less than $\mathsf{min}+h$; otherwise, she can choose to form a new coalition. Then, we show the statement by contradiction. Suppose that there exists a coalition $S\in C_g$ such that $|S|>\left\lceil \frac{\delta\cdot\mathsf{min}}{\mathsf{min} + h} \right\rceil$. Then we have
    \begin{align*}
        v(S) & \geq \mathsf{min} + \left(\left\lceil \frac{\delta\cdot\mathsf{min}}{\mathsf{min} + h} \right\rceil +1 -1\right)(\mathsf{min} + h) \\
        & \geq \mathsf{min} + \frac{\delta\cdot\mathsf{min}}{\mathsf{min} + h} \cdot (\mathsf{min} + h) = (\delta + 1)\cdot \mathsf{min},
    \end{align*}
    which contradicts the assumption that $v\in V_\delta$. Therefore, there cannot be a coalition $S\in C_g$ such that $|S|>\left\lceil \frac{\delta\cdot\mathsf{min}}{\mathsf{min} + h} \right\rceil$, so that the statement is true.
\end{proof}
Next, we determine the competitive ratio of the AMC-$h$ policy.

\begin{theorem}\label{thm:amchlb2}
    The AMC-$h$ policy has a competitive ratio of $\frac{\mathsf{min}+\sqrt{\mathsf{min}(\mathsf{min}+4\mathsf{max})}}{2\mathsf{max}}$ for maximizing social welfare when the characteristic function $v\in V_{2}$ when $h=\sqrt{\mathsf{min}(\mathsf{min}+4\mathsf{max})}-3\mathsf{min}$.
\end{theorem}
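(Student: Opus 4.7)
The plan is to establish the per-OPT-coalition inequality $\sum_{i \in T} \varphi_i \geq \alpha^* \, v(T)$ for every $T \in C^*$, where $\alpha^* = \frac{\mathsf{min}+\sqrt{\mathsf{min}(\mathsf{min}+4\mathsf{max})}}{2\mathsf{max}}$; summing over $T$ then yields $W_g \geq \alpha^* W^*$ on any instance, which is exactly the claimed competitive-ratio bound. The whole argument is powered by the algebraic identity (a direct check from the prescribed $h$) $\alpha^* = 2\mathsf{min}/(2\mathsf{min}+h)$, equivalently $(\alpha^*)^2 \mathsf{max} = \mathsf{min}(1+\alpha^*)$. A routine sanity step first confirms $0 \leq h < \mathsf{min}$ on $V_2$ using $2\mathsf{min} \leq \mathsf{max} < 3\mathsf{min}$, after which Lemmas~\ref{lem:ssize} and~\ref{lem:amchsize} force every coalition in both $C^*$ and $C_g$ to have size at most two.

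For an OPT singleton, IR (Proposition~\ref{prop:ir}) gives the bound instantly since $\alpha^* \leq 1$. For an OPT pair $\{i,j\}$, taking WLOG $\pi^{-1}(i) < \pi^{-1}(j)$, I split on $i$'s $C_g$-status at the moment $j$ arrives. If $\{i,j\} \in C_g$, non-wastefulness gives exact equality $\varphi_i + \varphi_j = v(\{i,j\})$. If $i$ is still a singleton at that moment, then $j$ had the option of joining $\{i\}$ for an allocation of $v(\{i,j\}) - v(\{i\}) - h$, so by greedy choice and irrevocability $\varphi_j \geq v(\{i,j\}) - v(\{i\}) - h$; combined with $\varphi_i \geq v(\{i\})$ from IR this yields $\varphi_i + \varphi_j \geq \max\{v(\{i\}) + v(\{j\}),\, v(\{i,j\}) - h\}$. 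A two-branch split on whether $v(\{i,j\}) \leq 2\mathsf{min}+h$ delivers $\alpha^* v(\{i,j\})$ in each branch, using $\alpha^*(2\mathsf{min}+h) = 2\mathsf{min}$ and $h/(1-\alpha^*) = 2\mathsf{min}+h$.

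The main obstacle, and where the choice of $h$ really matters, is the remaining subcase: $i$ is already paired with some $i'$ in $C_g$ by the time $j$ arrives, so the ``could-have-joined-$\{i\}$'' argument is no longer available to $j$. Here I will exploit the AMC-$h$ distribution rule directly: because $i'$ chose to join $i$ rather than start a new coalition, her marginal must exceed $h$, and the rule then credits $i$ with exactly a bonus of $h$, so $\varphi_i = v(\{i\}) + h$. Combined with the trivial $\varphi_j \geq v(\{j\})$, this gives $\varphi_i + \varphi_j \geq 2\mathsf{min} + h = 2\mathsf{min}/\alpha^*$, and $\alpha^* v(\{i,j\}) \leq \alpha^* \mathsf{max} \leq 2\mathsf{min}/\alpha^*$ follows from $(\alpha^*)^2 \mathsf{max} = \mathsf{min}(1+\alpha^*) \leq 2\mathsf{min}$. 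Summing the per-coalition bounds over all $T \in C^*$ concludes the proof.
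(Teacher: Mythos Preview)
Your per-OPT-coalition charging is elegant, but the ``already paired'' subcase has a real gap: you assert that ``$i'$ chose to join $i$'', yet nothing in the setup forces $i$ to be the founder. If $i'$ arrived \emph{before} $i$ and $i$ joined $i'$, then under AMC-$h$ the bonus $h$ is credited to $i'$, not to $i$; you only get $\varphi_i = v(\{i,i'\}) - v(\{i'\}) - h$, which can be as small as $v(\{i\})$. Together with $\varphi_j \geq v(\{j\})$ this gives only $\varphi_i + \varphi_j \geq 2\mathsf{min}$, and since the prescribed $h$ satisfies $2\mathsf{min}+h \leq \mathsf{max}$ on $V_2$ (strictly, except at $\mathsf{max}=2\mathsf{min}$), this falls short of $\alpha^*\mathsf{max}$. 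A concrete witness: with $\mathsf{min}=1$, $\mathsf{max}=2.5$ (so $h\approx 0.317$, $\alpha^*\approx 0.863$), arrival order $(a_1,a_2,a_3,a_4)$, singleton values $1$, $v(\{a_1,a_2\}) = 2+h+\epsilon$, $v(\{a_2,a_3\}) = v(\{a_1,a_4\}) = 2.5$, and all other pairs $=2$. Greedy under AMC-$h$ yields $C_g=\{\{a_1,a_2\},\{a_3\},\{a_4\}\}$, while $C^*=\{\{a_2,a_3\},\{a_1,a_4\}\}$. For $T=\{a_2,a_3\}$ one has $\varphi_{a_2}+\varphi_{a_3} = (1+\epsilon)+1 < 2.16 \approx \alpha^*\cdot 2.5$, so your per-$T$ inequality fails. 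The missing $h$ was paid to $a_1$, who sits in the \emph{other} OPT pair.

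The paper sidesteps this by not charging per OPT coalition at all. Instead it refines $N$ into the coarsest common blocks of $C_g$ and $C^*$ and bounds $\mathsf{SW}(C_k)/\mathsf{SW}(C_k^*)$ on each block using coalition \emph{values} rather than individual allocations; this automatically lets the founder's $h$-bonus offset the adjacent deficit within the same block. The binding configuration is $C_k = \{a_1\},\{a_2,a_3\},\dots,\{a_{2x}\}$ against $C_k^* = \{a_1,a_2\},\dots,\{a_{2x-1},a_{2x}\}$, which yields the bound $(4\mathsf{min}+h)/(2\mathsf{max})$; equating this with $2\mathsf{min}/(2\mathsf{min}+h)$ is exactly what pins down the stated $h$. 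Your approach could be repaired by an explicit amortization that routes each founder's bonus back to the OPT pair containing the joiner, but as written the crucial subcase is not covered.
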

\begin{proof}
    If $h > \mathsf{max} - 2\mathsf{min}$, then the greedy players only form singletons since for any two players $a_i$ and $a_j$, $v(\{a_i,a_j\}) - v(\{a_i\}) - h \leq \mathsf{max} - \mathsf{min} - h < h + 2\mathsf{min} - \mathsf{min} - h = \mathsf{min}\leq v(\{a_j\})$. Hence, the worst case occurs when all pairs of two players can create value $\mathsf{max}$, the ratio over the optimal social welfare will be $\frac{2\mathsf{min}}{\mathsf{max}}$, so that such an $h$ will make AMC-$h$ policy has a competitive ratio of $\frac{2\mathsf{min}}{\mathsf{max}}$. 
    
    If $h \leq \mathsf{max} - 2\mathsf{min}$, given any $v\in V_{2}$ and any $\pi$, let the $C_g$ be the coalition structure formed by greedy players, and $C^*$ be a coalition structure that maximizes social welfare. Then we can split both coalition structure to the same number of parts with the same sets of players like $C_g = (C_1, C_2, \dots, C_m)$ and $C^* = (C^*_1, C^*_2, \dots, C^*_m)$, such that $\bigcup_{S\in C_k}S = \bigcup_{S\in C^*_k}S$ for any $1\leq k\leq m$, and each pair $(C_k, C^*_k)$ cannot be split to smaller parts with the same sets of players. Then, 
    $ \frac{\mathsf{SW}(C_g)}{\mathsf{SW}(C^*)} \geq \min\left\{ \frac{\mathsf{SW}(C_1)}{\mathsf{SW}(C^*_1)}, \frac{\mathsf{SW}(C_2)}{\mathsf{SW}(C^*_2)}, \dots, \frac{\mathsf{SW}(C_m)}{\mathsf{SW}(C^*_m)} \right\} $.
    
    Hence, we only need to the lower bound of $\frac{\mathsf{SW}(C_k)}{\mathsf{SW}(C^*_k)}$, and it has following cases when $C_k$ is different from $C^*_k$:
    \begin{itemize}
        \item If $\begin{cases}
            C_k = S_1, S_2, \dots, S_q \\
            C_k^* = S_1\cup S_2\cup\dots\cup S_q
        \end{cases}$ 
        
        with $|S_1| + |S_2| + \dots + |S_q| = r\geq 2$, then according to Lemma~\ref{lem:ssize}, we have $|S_1\cup\dots\cup S_q|\leq\delta =  2$, so that $r=2$ and $q = 2$. Let $S_1 = \{a_1\}$ and $S_2 = \{a_2\}$; then this case only occur when $v(\{a_1, a_2\})\leq v(\{a_1\}) + v(\{a_2\}) + h$. Hence, we have $\frac{\mathsf{SW}(C_k)}{\mathsf{SW}(C^*_k)}\geq \frac{v(\{a_1\}) + v(\{a_2\})}{v(\{a_1\}) + v(\{a_2\}) + h}\geq \frac{2\mathsf{min}}{2\mathsf{min} + h}$.
        \item If $\begin{cases}
            C_k = S_1\cup S_2\cup\dots\cup S_q \\
            C_k^* = S_1, S_2, \dots, S_q
        \end{cases}$ 
        
        with $|S_1| + |S_2| + \dots + |S_q| = r\geq 2$, then according to Lemma~\ref{lem:amchsize}, we have $|S_1\cup S_2\cup\dots\cup S_q|\leq\left\lceil \frac{\delta\cdot\mathsf{min}}{\mathsf{min} + h} \right\rceil = \left\lceil \frac{2\mathsf{min}}{\mathsf{min} + h} \right\rceil \leq 2$, so that $r = 2$ and $q = 2$. Let $S_1 = \{a_1\}$ and $S_2 = \{a_2\}$; then this case only occur when $v(\{a_1, a_2\})\geq v(\{a_1\}) + v(\{a_2\}) + h$. However, we also have $\mathsf{SW}(C_k^*)\geq \mathsf{SW}(C_k)$, which means $v(\{a_1\}) + v(\{a_2\}) \geq v(\{a_1,a_2\})$. Hence, this case cannot happen for any $h>0$, and for $h = 0$, the ratio in this case will be $1$. 
        \item If $\begin{cases}
            C_k = S_1, S_2, \dots, S_q \\
            C_k^* = S_1', S_2', \dots, S_{q'}'
        \end{cases}$,  
        
        with $q>1$, $q'>1$ and $|S_1| + |S_2| + \dots + |S_q| = r> 2$, then similarly, according to Lemma~\ref{lem:ssize} and Lemma~\ref{lem:amchsize}, each coalition in $C_k$ or $C_k^*$ should have a size no larger than $2$. If any coalition in $C_k$ or $C_k^*$ has exactly size of $2$, then we have $\frac{\mathsf{SW}(C_k)}{\mathsf{SW}(C^*_k)}\geq \frac{(r/2)\cdot(2\mathsf{min}+h)}{(r/2)\cdot\mathsf{max}} = \frac{2\mathsf{min}+h}{\mathsf{max}}$. Note that the partition is of the smallest size. If there is a singleton in $C_k$, denoted by $\{a_1\}$, then there must be a pair in $C_k^*$ contains $a_1$, denoted by $\{a_1, a_2\}$. If $a_2$ is a singleton in $C_k$, then there will be no more any other players; otherwise, there should be a player denoted by $a_3$ such that $\{a_2, a_3\}\in C_k$. Continue the process, and finally we can know that there are exactly {two singletons} among all coalitions in $C_k$ and $C_k^*$. It is the same if we start to consider a singleton in $C_k^*$.  {Hence, by the symmetry of indices, there are three cases}:
        \begin{itemize}
            \item $\begin{cases}
            C_k = \{a_1\}, \{a_2,a_3\}, \{a_4,a_5\}, \dots, \{a_{2x-2},a_{2x-1}\}, \{a_{2x}\} \\
            C_k^* = \{a_1,a_2\}, \{a_3,a_4\}, \dots, \{a_{2x-1}, a_{2x}\}
        \end{cases}$ 
        
        with $r = 2x$ and $x\geq 2$: the worst case happens when all coalitions in $C_k$ get their minimal possible value, and all pairs in $C_k^*$ get their maximal possible value. Then, we have $\frac{\mathsf{SW}(C_k)}{\mathsf{SW}(C^*_k)}\geq \frac{2\mathsf{min}+(x-1)\cdot(2\mathsf{min}+h)}{x\cdot\mathsf{max}} \geq \frac{4\mathsf{min} + h}{2\mathsf{max}}$.
        \item $\begin{cases}
            C_k = \{a_1\}, \{a_2,a_3\}, \{a_4,a_5\}, \dots, \{a_{2x},a_{2x+1}\} \\
            C_k^* = \{a_1,a_2\}, \{a_3,a_4\} \dots, \{a_{2x-1}, a_{2x}\}, \{a_{2x+1}\}
        \end{cases}$ 
        
        with $r = 2x + 1$: since player $a_{2x+1}$ can be paired in $C_k$, then it should satisfy $v(\{a_{2x+1}\})\leq v(\{a_{2x},a_{2x+1}\}) - v(\{a_{2x}\}) - h \leq v(\{a_{2x},a_{2x+1}\}) - \mathsf{min} - h$. 
        \\
    Then, we have $\frac{\mathsf{SW}(C_k)}{\mathsf{SW}(C^*_k)}\geq \frac{\mathsf{min}+(x-1)\cdot(2\mathsf{min}+h) + v(\{a_{2x},a_{2x+1}\})}{x\cdot\mathsf{max} + v(\{a_{2x},a_{2x+1}\}) - \mathsf{min} - h} \geq \frac{\mathsf{min}+x\cdot(2\mathsf{min}+h)}{x\cdot\mathsf{max} + \mathsf{min}} \geq \frac{2\mathsf{min} + h}{\mathsf{max}}$.
        \item $\begin{cases}
            C_k = \{a_1,a_2\}, \{a_3,a_4\}, \dots, \{a_{2x-1}, a_{2x}\} \\
            C_k^* = \{a_1\}, \{a_2,a_3\}, \{a_4,a_5\}, \dots, \{a_{2x-2},a_{2x-1}\}, \{a_{2x}\}
        \end{cases}$

        with $r = 2x$: similar to above, we have $\frac{\mathsf{SW}(C_k)}{\mathsf{SW}(C^*_k)} \geq \frac{x\cdot(2\mathsf{min}+h)}{(x-1)\cdot\mathsf{max} + 2\mathsf{min}} \geq \frac{x\cdot(2\mathsf{min}+h)}{x\cdot\mathsf{max}} \geq \frac{2\mathsf{min} + h}{\mathsf{max}}$.
        \end{itemize}
    \end{itemize}
    Taking all cases together, when $h\leq \mathsf{max} - 2\mathsf{min}$, the ratio $\alpha$ satisfies
    $\alpha\geq \min\left\{ \frac{2\mathsf{min}}{2\mathsf{min}+h}, \frac{4\mathsf{min}+h}{2\mathsf{max}} \right\}$.
    
    \noindent Therefore, the best $\alpha$ can be achieved is $\frac{\mathsf{min}+\sqrt{\mathsf{min}(\mathsf{min}+4\mathsf{max})}}{2\mathsf{max}}$ when $h=\sqrt{\mathsf{min}(\mathsf{min}+4\mathsf{max})}-3\mathsf{min}$ ({this choice of $h$ makes that $h\leq\mathsf{max} - 2\mathsf{min}$} always be hold).
\end{proof}

\begin{theorem}\label{thm:amchlb3}
    The AMC-$h$ policy has a competitive ratio of $\min\left\{\frac{1}{2}, \frac{3\mathsf{min}}{\mathsf{max}}\right\}$ for maximizing social welfare when the characteristic function belonging to the class $V_{\delta}$ with $\delta \geq 3$, when $h = \begin{cases}
        \mathsf{min} & \mathsf{max} < 4\mathsf{min} \text{ (i.e., }v\in V_3 \text{)}; \\
        2\mathsf{min} & \mathsf{max} \geq 4\mathsf{min} \text{ (i.e., }v\in V_{\geq 4} \text{)}.
    \end{cases}$
\end{theorem}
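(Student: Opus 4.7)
My plan is to imitate the proof strategy of Theorem~\ref{thm:amchlb2}, which is already in place for $V_2$: decompose both $C_g$ and $C^*$ into the \emph{finest} matching of pieces $(C_k, C^*_k)$ that cover the same set of players (so that $\frac{\mathsf{SW}(C_g)}{\mathsf{SW}(C^*)} \geq \min_k \frac{\mathsf{SW}(C_k)}{\mathsf{SW}(C^*_k)}$), and then bound the per-piece ratio by case analysis. What changes for $V_{\geq 3}$ is the maximum allowed coalition sizes in $C_g$ and $C^*$, and the required threshold value of $h$.

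First I would record the two structural bounds that I will reuse in every case. From Lemma~\ref{lem:ssize}, each coalition of $C^*$ has size at most $\delta$. From Lemma~\ref{lem:amchsize}, with the stated choice of $h$, every $S \in C_g$ satisfies $|S| \leq 2$ when $v \in V_3$ (since $\lceil 3\mathsf{min}/(2\mathsf{min}) \rceil = 2$) and $|S| \leq \lceil \delta/3 \rceil$ when $v \in V_{\geq 4}$ (since $\lceil \delta\mathsf{min}/(3\mathsf{min}) \rceil = \lceil \delta/3 \rceil$). I would also extract the greedy admission inequality: any non-first player of a coalition $S \in C_g$ brought marginal contribution at least $\mathsf{min}+h$ (otherwise she would have strictly preferred a singleton), so $v(S) \geq \mathsf{min} + (|S|-1)(\mathsf{min}+h)$, which gives $v(S) \geq 2\mathsf{min} + h$ for any pair in $C_g$, i.e.\ $3\mathsf{min}$ when $h=\mathsf{min}$ and $4\mathsf{min}$ when $h=2\mathsf{min}$.

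Next I would do the per-piece enumeration, mirroring the three branches in the proof of Theorem~\ref{thm:amchlb2}: (i) $C^*_k$ is one big coalition while $C_k$ is split — here $|C^*_k|$ ranges up to $\delta$, and the worst case is a $\delta$-coalition of value $\mathsf{max}$ versus a split whose value is at least the sum of $\lceil r/|S_{\max}|\rceil$ minimum-pair or singleton values, giving a ratio $\geq 3\mathsf{min}/\mathsf{max}$ (the tight instance being three singletons in $C_g$ versus a triple in $C^*$); (ii) $C_k$ is one coalition while $C^*_k$ is split — here I would invoke optimality of $C^*$ to note $\mathsf{SW}(C_k) \leq \mathsf{SW}(C^*_k)$, which already forces the ratio to $1$ or rules the case out under the greedy lower bound $v(S) \geq \mathsf{min}+(|S|-1)(\mathsf{min}+h)$; and (iii) the mixed case where both sides are split. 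For (iii), the size bounds reduce the bookkeeping dramatically: in $V_3$, $C_k$ consists only of singletons and pairs (each pair of value $\geq 3\mathsf{min}$); in $V_{\geq 4}$, $C_k$ has coalitions of size at most $\lceil \delta/3 \rceil$, each of value $\geq \mathsf{min} + (|S|-1)\cdot 3\mathsf{min}$. Matching any coalition of $C_k$ against coalitions of $C^*_k$ of total size $\leq \delta \cdot (\text{number of coalitions})$ and total value $\leq (\text{number of coalitions})\cdot \mathsf{max}$, the per-piece ratio is at least $\min\{(2\mathsf{min}+h)/\mathsf{max}, \, 1/2\}$, which under both choices of $h$ is $\geq \min\{1/2, 3\mathsf{min}/\mathsf{max}\}$.

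The main obstacle will be the mixed case (iii) in $V_{\geq 4}$, because the sizes on both sides can vary and the accounting of ``boundary'' singletons — those that are merged into different coalitions in $C^*$ than in $C_g$ — requires invoking the greedy admission inequality in the opposite direction too (if a player ended up as a singleton in $C_g$, then every existing coalition $T$ at her arrival had marginal contribution $v(T \cup \{i\}) - v(T) < \mathsf{min} + h$). I would handle this by the same chain-following argument used in Theorem~\ref{thm:amchlb2}: start at any singleton of $C_k$, walk along the pieces of $C^*_k$ that overlap with the coalitions of $C_k$, and argue that each ``link'' in the chain contributes value in a balanced way. Finally I would verify the choice of $h$ is calibrated so that $2\mathsf{min}+h \geq \mathsf{max}/2$ whenever $\mathsf{max} < 6\mathsf{min}$ (yielding the $1/2$ branch) and that the $3\mathsf{min}/\mathsf{max}$ branch dominates otherwise, matching the claimed $\min\{1/2,\, 3\mathsf{min}/\mathsf{max}\}$ bound.
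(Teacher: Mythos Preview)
Your high-level strategy (finest common refinement of $C_g$ and $C^*$, then per-piece case analysis) matches the paper. But the case analysis has two concrete gaps that would block the proof.

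\textbf{Case (i), the $r=2$ subcase.} You assert that case (i) yields a ratio $\geq 3\mathsf{min}/\mathsf{max}$, with the tight instance being three singletons versus a triple. But when $r=2$ (two singletons $\{a_1\},\{a_2\}$ in $C_g$ versus the pair $\{a_1,a_2\}$ in $C^*$), the naive count gives only $2\mathsf{min}/\mathsf{max}$. What saves this subcase is the \emph{non-admission} inequality: whichever of $a_1,a_2$ arrived second chose a singleton over joining the other, so $v(\{a_1,a_2\}) \leq v(\{a_1\})+v(\{a_2\})+h$, giving the ratio $\geq \frac{2\mathsf{min}}{2\mathsf{min}+h}$. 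With $h=2\mathsf{min}$ this equals $\tfrac{1}{2}$, and with $h=\mathsf{min}$ it equals $\tfrac{2}{3}$. This is precisely where the $\tfrac{1}{2}$ in the theorem comes from --- not from case (iii) as you suggest. You never invoke this inequality, so your case (i) bound is unjustified and your attribution of the $\tfrac{1}{2}$ is misplaced.

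\textbf{Case (ii) is not ruled out.} Your claim that optimality ``forces the ratio to $1$ or rules the case out'' only works when the single coalition in $C_g$ has size $2$ (as in Theorem~\ref{thm:amchlb2}) or when $C^*_k$ consists entirely of singletons. But for $\delta \geq 7$ with $h = 2\mathsf{min}$, Lemma~\ref{lem:amchsize} permits $|S| \leq \lceil \delta/3\rceil \geq 3$ in $C_g$. Take $T=\{a_1,a_2,a_3\}\in C_g$ with $v(T)=3\mathsf{min}+2h+\epsilon$ formed greedily, while $C^*_k=\{a_1\},\{a_2,a_3\}$ with $v(\{a_2,a_3\})=\mathsf{max}$: nothing contradicts optimality or the greedy admission bound, yet the ratio is strictly below $1$. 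The paper handles this by separating out the singleton pieces of $C^*_k$ (whose values appear in the greedy lower bound for $v(T)$) from the non-singleton pieces (each bounded by $\mathsf{max}$), obtaining $\frac{\mathsf{SW}(C_k)}{\mathsf{SW}(C^*_k)}\geq \frac{2\mathsf{min}+h}{\mathsf{max}}$. Your sketch omits this argument entirely.

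\textbf{Case (iii).} Your chain-following idea is in the right spirit, but the paper's actual argument is considerably more delicate: it splits each coalition of $C_k$ meeting the ``singleton'' region of $C^*_k$ into two sub-coalitions $S^L,S^R$ with a careful reassignment of the extra $h$, and then recurses. Along the way it produces per-piece ratios as low as $\frac{2\mathsf{min}+h/2}{\mathsf{max}}$ (not $\frac{2\mathsf{min}+h}{\mathsf{max}}$ as you claim), which for $V_3$ would fall below $\tfrac{1}{2}$; a separate argument using $\lceil \mathsf{max}/(\mathsf{min}+h)\rceil \leq 2$ is needed there to recover $\frac{4\mathsf{min}+h}{4\mathsf{min}+4h}\geq \tfrac{1}{2}$. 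Your uniform claim of $\frac{2\mathsf{min}+h}{\mathsf{max}}$ for case (iii) is too optimistic and would not survive the detailed accounting.
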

\begin{proof}
    If $h > \mathsf{max} - 2\mathsf{min}$, then the greedy players can only form singletons since for any two players $a_i$ and $a_j$, $v(\{a_i,a_j\}) - v(\{a_i\}) - h \leq \mathsf{max} - \mathsf{min} - h < h + 2\mathsf{min} - \mathsf{min} - h = \mathsf{min}\leq v(\{a_j\})$. Hence, the worst case occurs when all pairs of two players can create value $\mathsf{max}$, the ratio over the optimal social welfare will be $\frac{2\mathsf{min}}{\mathsf{max}}$, so that such an $h$ will make AMC-$h$ policy has a competitive ratio of $\frac{2\mathsf{min}}{\mathsf{max}}$. Then, we will focus on the opposite case.
    
    If $h \leq \mathsf{max} - 2\mathsf{min}$, given any $v\in V_{\geq 3}$ and any $\pi$, let the $C_g$ be the coalition structure formed by players, and $C^*$ be a coalition structure that maximizes social welfare. Then we can split both coalition structure to the same number of parts with the same sets of players like $C_g = (C_1, C_2, \dots, C_m)$ and $C^* = (C^*_1, C^*_2, \dots, C^*_m)$, such that $\bigcup_{S\in C_k}S = \bigcup_{S\in C^*_k}S$ for any $1\leq k\leq m$, and each pair $(C_k, C^*_k)$ cannot be split to smaller parts with the same sets of players. Then, we have
    \[ \frac{\mathsf{SW}(C_g)}{\mathsf{SW}(C^*)} \geq \min\left\{ \frac{\mathsf{SW}(C_1)}{\mathsf{SW}(C^*_1)}, \frac{\mathsf{SW}(C_2)}{\mathsf{SW}(C^*_2)}, \dots, \frac{\mathsf{SW}(C_m)}{\mathsf{SW}(C^*_m)} \right\}. \]
    Hence, we only need to the lower bound of $\frac{\mathsf{SW}(C_k)}{\mathsf{SW}(C^*_k)}$, and it has following cases when $C_k$ is different from $C^*_k$:
    \begin{itemize}
        \item If $\begin{cases}
            C_k = S_1, S_2, \dots, S_q \\
            C_k^* = S_1\cup S_2\cup\dots\cup S_q
        \end{cases}$ 
        
        with $|S_1| + |S_2| + \dots + |S_q| = r\geq 2$, then according to Lemma~\ref{lem:amchsize}, we have $v(S_1) + v(S_2) + \dots + v(S_q) \geq r \cdot \mathsf{min} + (r-q)\cdot h$. If $r = 2$, i.e., $S_1 = \{a_1\}$ and $S_2 = \{a_2\}$, this case only occur when $v(\{a_1, a_2\})\leq v(\{a_1\}) + v(\{a_2\}) + h$; then, we have $\frac{\mathsf{SW}(C_k)}{\mathsf{SW}(C^*_k)}\geq \frac{v(\{a_1\}) + v(\{a_2\})}{v(\{a_1\}) + v(\{a_2\}) + h}\geq \frac{2\mathsf{min}}{2\mathsf{min} + h}$. If $r\geq 3$, then we have $\frac{\mathsf{SW}(C_k)}{\mathsf{SW}(C^*_k)}\geq \frac{r \cdot \mathsf{min} + (r-q)\cdot h}{\mathsf{max}}\geq \frac{r\cdot\mathsf{min}}{\mathsf{max}}\geq \frac{3\mathsf{min}}{\mathsf{max}}$. Hence, in this case, $\frac{\mathsf{SW}(C_k)}{\mathsf{SW}(C^*_k)}\geq \min\left\{ \frac{2\mathsf{min}}{2\mathsf{min} + h}, \frac{3\mathsf{min}}{\mathsf{max}} \right\}$.
        \item If $\begin{cases}
            C_k = S_1\cup S_2\cup\dots\cup S_q \\
            C_k^* = S_1, S_2, \dots, S_q
        \end{cases}$ 
        
        with $|S_1| + |S_2| + \dots + |S_q| = r\geq 2$, then similarly according to Lemma~\ref{lem:amchsize}, we have $v(S_1\cup S_2\cup\dots\cup S_q)\geq r\cdot \mathsf{min} + (r-1)\cdot h$. Moreover, suppose there are $p\leq r$ singletons in $C_k^*$, and w.l.o.g, let $S_1$ to $S_p$ be singletons. According to AMC-$h$, when a singleton in $C_k^*$ comes, since she chooses to join with others rather than forming a new coalition, her marginal contribution must be no less than the value of being singleton plus $h$ (except for the one coming the first). Then, we have $\frac{\mathsf{SW}(C_k)}{\mathsf{SW}(C^*_k)}\geq \frac{(r-p)\cdot\mathsf{min} + (r-1)\cdot h + v(S_1) + \dots + v(S_p)}{\frac{r-p}{2}\mathsf{max} + v(S_1) + \dots + v(S_p)}\geq \frac{(r-p)\cdot\mathsf{min} + (r-1)\cdot h}{\frac{r-p}{2}\mathsf{max}} = \frac{2\mathsf{min}}{\mathsf{max}} + \frac{2(r-1)\cdot h}{(r-p)\cdot\mathsf{max}}$, where $\frac{2(r-1)\cdot h}{(r-p)\cdot\mathsf{max}} \geq \frac{2(r-1)\cdot h}{r\cdot\mathsf{max}} \geq 2\cdot\frac{1}{2}\cdot\frac{h}{\mathsf{max}} = \frac{h}{\mathsf{max}}$. Hence, $\frac{\mathsf{SW}(C_k)}{\mathsf{SW}(C^*_k)}\geq \frac{2\mathsf{min} + h}{\mathsf{max}}$.
        \item If $\begin{cases}
            C_k = S_1, S_2, \dots, S_q \\
            C_k^* = S_1', S_2', \dots, S_{q'}'
        \end{cases}$,

        with $q>1$, $q'>1$ and $|S_1| + |S_2| + \dots + |S_q| = r> 2$. Then, there are the following possibilities in this case:
        \begin{itemize}
            \item If there is no singleton in $C_k$, and suppose there are $p\leq r$ singletons in $C_k^*$ (w.l.o.g., let $S_1'$ to $S_p'$ be singletons), then for each singleton $S_i'$, there must be a coalition $S_j\in C_k$ that contains her, and her marginal contribution is at least $v(S_i') + h$ or $v(S_i')$ if she comes the first among $S_j$. Hence, we have $\frac{\mathsf{SW}(C_k)}{\mathsf{SW}(C^*_k)}\geq \frac{(r-p)\cdot\mathsf{min} + (r-q)\cdot h + v(S_1) + \dots + v(S_p)}{\frac{r-p}{2}\mathsf{max} + v(S_1) + \dots + v(S_p)}\geq \frac{(r-p)\cdot\mathsf{min} + \frac{r}{2}\cdot h}{\frac{r-p}{2}\mathsf{max}} \geq \frac{2\mathsf{min}}{\mathsf{max}} + \frac{r\cdot h}{(r-p)\cdot\mathsf{max}} \geq \frac{2\mathsf{min} + h}{\mathsf{max}}$.
            \item If there is exactly one singleton in $C_k$, also suppose there are $p\leq r$ singletons in $C_k^*$ (w.l.o.g., let $S_1'$ to $S_p'$ be singletons). Then, similar to the analysis in the first case, when $r$ is odd and $p\geq 1$ is odd, $\frac{\mathsf{SW}(C_k)}{\mathsf{SW}(C^*_k)}\geq \frac{(r-p)\cdot\mathsf{min} + \frac{r-1}{2}\cdot h}{\frac{r-p}{2}\mathsf{max}} \geq \frac{2\mathsf{min}}{\mathsf{max}} + \frac{(r-1)\cdot h}{(r-p)\cdot\mathsf{max}} \geq \frac{2\mathsf{min} + h}{\mathsf{max}}$; when $r$ is odd and $p\geq 0$ is even, $\frac{\mathsf{SW}(C_k)}{\mathsf{SW}(C^*_k)}\geq \frac{(r-p)\cdot\mathsf{min} + \frac{r-1}{2}\cdot h}{\frac{r-p-1}{2}\mathsf{max}} \geq \frac{2(r-p)\mathsf{min}}{(r-p-1)\mathsf{max}} + \frac{(r-1)\cdot h}{(r-p-1)\cdot\mathsf{max}} \geq \frac{2\mathsf{min} + h}{\mathsf{max}}$; when $r$ is even and $p$ is odd, $\frac{\mathsf{SW}(C_k)}{\mathsf{SW}(C^*_k)}\geq \frac{(r-p)\cdot\mathsf{min} + \frac{r}{2}\cdot h}{\frac{r-p}{2}\mathsf{max}} \geq \frac{2\mathsf{min}}{\mathsf{max}} + \frac{r\cdot h}{(r-p)\cdot\mathsf{max}} \geq \frac{2\mathsf{min} + h}{\mathsf{max}}$; when $r$ is even and $p$ is even, $\frac{\mathsf{SW}(C_k)}{\mathsf{SW}(C^*_k)}\geq \frac{(r-p)\cdot\mathsf{min} + \frac{r}{2}\cdot h}{\frac{r-p-1}{2}\mathsf{max}} \geq \frac{2(r-p)\mathsf{min}}{(r-p-1)\mathsf{max}} + \frac{r\cdot h}{(r-p)\cdot\mathsf{max}} \geq \frac{2\mathsf{min} + h}{\mathsf{max}}$.
            \item If there are more than one singletons in $C_k$, then suppose there are $1<p<q$ singletons in $C_k$ (w.l.o.g., let $S_1$ to $S_p$ be singletons, and denote the players as $a_1$ to $a_p$). Suppose players $a_1$ to $a_p$ have been divided into $p'$ coalitions in $C_k^*$ with $p'\leq p$, and w.l.o.g., let them be $S_1'$, $S_2'$, $\dots$, $S_{p'}'$. For each coalition among them, there is at least one player who is not being singleton in $C_k$, so they are all not singletons. Denote these additional players as $a_{p+1}$ to $a_{p+\bar{p}}$ with $\bar{p}\geq p'$. Then, consider the coalitions in $C_k$, denoting them as $S_{p+1}$ to $S_{p+\tilde{p}}$ with $\tilde{p}\leq \bar{p}$. Until now, we have grouped the coalitions in $C_k$ and $C_k^*$ as: $C_k = (S_1,\dots,S_p), (S_{p+1},\dots,S_{p+\tilde{p}}), (S_{p+\tilde{p}+1},\dots, S_q)$ and $C_k^* = (S_1',\dots, S_{p'}'), (S_{p'+1}',\dots, S_{q'}')$, where the first group of $C_k$ are singletons, the first group of $C_k^*$ contains all these players, the first two groups of $C_k$ contains all players in the first group of $C_k^*$, and there is no singletons except for the first group of $C_k$ and the second group of $C_k^*$.

            Now we focus on the second group of $C_k$ and try to split it. For each coalition $S\in (S_{p+1},\dots,S_{p+\tilde{p}})$, there are two kinds of players. One belongs to $a_{p+1}$ to $a_{p+\bar{p}}$ and is contained by the first group of $C_k^*$; the other is the opposite. Let the two kinds of players form the sub-coalition $S^L$ and $S^R$ respectively ($S^R$ could be empty). Then, we have $v(S)\geq \left(|S^L|\cdot\mathsf{min} + (|S^L| - 1)h\right) + h + \left(|S^R|\cdot\mathsf{min} + (|S^R| - 1)h\right)$; hence, by taking their minimum value, we can split $S$ into two coalitions $S^L$ and $S^R$ with an additional value of $h$. Here, we use a trick that when $S^R$ is non-empty: if $|S^R| = 1$, let the value of $h$ follows with $S^R$; otherwise it follows with $S^L$ to be their new minimum value. The competitive ratio now is $\alpha\geq \min\left\{ \frac{\mathsf{SW}(S_1,\dots,S_p,S^L_{p+1},\dots,S^L_{p+\tilde{p}})}{\mathsf{SW}(S_1',\dots,S_p')}, \frac{\mathsf{SW}(S^R_{p+1},\dots,S^R_{p+\tilde{p}},S_{p+\tilde{p}+1},\dots,S_q)}{\mathsf{SW}(S_{p'+1}',\dots,S_{q'}')}\right\}$ with $S^L$ and $S^R$ take the above modified minimum value. Then, we can check those two parts.
            \begin{itemize}
                \item For the first part, we can notice that we can do the same way of splitting with the smallest part with the same set of players as that for original $C_g$ and $C^*$. Notice that in these parts of $C^*$, there is no singleton. Then, there are following cases for any one of the smallest part, denoted by $C_s$ and $C_s^*$ with $r_s$ players. (i) If all the coalitions in $C_s^*$ have size of no less than $3$, then we must have $\frac{\mathsf{SW}(C_s)}{\mathsf{SW}(C^*_s)} \geq \frac{r_s\cdot\mathsf{min}}{(r_s/3)\cdot \mathsf{max}} = \frac{3\mathsf{min}}{\mathsf{max}}$. (ii) If $C_s^*$ is mixed of coalitions with size no less than $3$ and size $2$, consider a coalition in $C_s^*$ has size $y+1$, $y\geq 2$. Suppose there are $x$ singletons belonging to $(S_1,\dots,S_p)$ in $C_s$. Then, if $x> y$, there are at most $y$ other pairs in $C_s^*$ whose members are paired in $C_s$ with members in this coalition, and there would be no other singletons in $C_s$; otherwise, it will not be the smallest partition. Hence, the ratio is no less than $\frac{(y+1)\cdot\mathsf{min}+y\cdot(2\mathsf{min}+h)}{(y+1)\cdot\mathsf{max}} = \left(3-\frac{2}{y+1}\right)\cdot\frac{\mathsf{min}}{\mathsf{max}}+\left(1-\frac{1}{y+1}\right)\cdot\frac{h}{\mathsf{max}}\geq \frac{7\mathsf{min}+2h}{3\mathsf{max}}$. If $x\leq y$, since $C_S^*$ is the minimal set of coalitions to contain these $x$ players, there are at most $x-1$ additional coalitions in $C_S^*$, which, in the worst case, are all pairs. Also, since $C_S$ and $C_S^*$ cannot be further partitioned, the additional $x-1$ players cannot be singletons in $C_S$; they must be at least paired in $C_S$. The remaining number of players in $C_S$ is $y+1+2(x-1)-x-2(x-1)=y-x+1$, which could all be singletons in the worst case. Namely, there are at most $x-1$ other pairs in $C_s^*$ whose members are paired with members in this coalition in $C_s$, and there could be another $y-x+1$ singletons in $C_s$. Hence, the ratio is no less than $\frac{x\cdot\mathsf{min}+(y-x+1)\cdot\mathsf{min}+(x-1)(2\mathsf{min}+h)}{x\cdot\mathsf{max}}=\frac{(2x+y-1)\mathsf{min}+(x-1)h}{x\cdot \mathsf{max}}\geq \frac{(3x-1)\mathsf{min}+(x-1)h}{x\cdot \mathsf{max}} = \left(3-\frac{1}{x}\right)\cdot\frac{\mathsf{min}}{\mathsf{max}}+\left(1-\frac{1}{x}\right)\cdot\frac{h}{\mathsf{max}}\geq \frac{3\mathsf{min}}{\mathsf{max}}$. (iii) If $C_s^*$ only has coalitions of size $2$, then we can let $r_s = 2x$. Notice that we consider the smallest partition, so the only possible structure of $C_s$ is that the players not belonging to $(S_1,\dots,S_p)$ should form a single coalition, denoted by $T^L$, and let $T^R$ be the corresponding part with it before splitting. If $|T^R| > 1$, then $T^L$ gets an additional value of $h$. Hence, we have $\frac{\mathsf{SW}(C_s)}{\mathsf{SW}(C^*_s)} \geq \frac{x\cdot\mathsf{min}+x(\mathsf{min}+h)}{x\cdot\mathsf{max}}=\frac{2\mathsf{min}+h}{\mathsf{max}}$. If $|T^R|\leq 1$, then there is no additional value of $h$. Hence, when $|T^L|\geq 2$, we have $\frac{\mathsf{SW}(C_s)}{\mathsf{SW}(C^*_s)} \geq \frac{x\cdot \mathsf{min}+x\cdot\mathsf{min}+(x-1)\cdot h}{x\cdot \mathsf{max}} = \frac{2\mathsf{min}+\frac{x-1}{x}h}{\mathsf{max}}\geq \frac{2\mathsf{min}+h/2}{\mathsf{max}}$. When $|T^L| = 1$ (then $x=1$) and $|T^R| = 0$, it becomes the first case of $C_k$ and $C_k^*$ where $C_k^*$ contains the union coalition of $C_k$, so that $\frac{\mathsf{SW}(C_s)}{\mathsf{SW}(C^*_s)} \geq \frac{2\mathsf{min}}{2\mathsf{min}+h}$. When $|T^L| = |T^R| = 1$ (then $x=1$), then we recover $T^L\cup T^R$, and move the coalition that contains $T^R$ in the second part back with $C_s^*$. For that coalition, if there are other players, continue to move the coalitions that contain them from the second part back with $C_s$, and so on so forth, until the players in new $C_s$ and $C_s^*$ are the same. Clearly, it will not affect the bound of the remaining second part. For the new $C_s$, there could not be any new singleton. For the singletons in the new $C_s^*$, similar to former cases, they should be as few as possible. Suppose there are $y$ additional players, then the ratio should be at least $\frac{3\mathsf{min}+h+\frac{y-1}{2}\cdot(2\mathsf{min}+h)}{\mathsf{max}+\frac{y-1}{2}\cdot\mathsf{max}+\mathsf{min}} = \frac{\frac{y+1}{2}\cdot(2\mathsf{min}+h)+\mathsf{min}}{\frac{y+1}{2}\cdot\mathsf{max}+\mathsf{min}}\geq \frac{2\mathsf{min}+h}{\mathsf{max}}$ if $y$ is odd; and at least $\frac{3\mathsf{min}+h+\frac{y-2}{2}\cdot(2\mathsf{min}+h)+\mathsf{min}+h}{\mathsf{max}+\frac{y}{2}\cdot\mathsf{max}} = \frac{(y+2)\cdot\mathsf{min}+\frac{y+2}{2}\cdot h}{\frac{y+2}{2}\cdot\mathsf{max}} = \frac{2\mathsf{min}+h}{\mathsf{max}}$ if $y$ is even.
                
                Here, we additionally consider special cases in above condition (iii) with $|T^L|\geq 2$ (then $x\geq 2$) and $|T^R|\leq 1$ (i.e., the case with bound $\frac{2\mathsf{min}+h/2}{\mathsf{max}}$), for $\left\lceil \frac{\mathsf{max}}{\mathsf{min} + h} \right\rceil\leq 2$. When $|T^R| = 0$, since $x\leq \left\lceil \frac{\delta\cdot\mathsf{min}}{\mathsf{min} + h} \right\rceil \leq \left\lceil \frac{\mathsf{max}}{\mathsf{min} + h} \right\rceil\leq 2$ by Lemma~\ref{lem:amchsize}, then $x$ can only be $2$. Since in this case, $\mathsf{max}\leq 2(\mathsf{min}+h)$, then we have $\frac{\mathsf{SW}(C_s)}{\mathsf{SW}(C^*_s)} \geq \frac{4\mathsf{min}+h}{4\mathsf{min}+4h}$. While for $|T^R| = 1$, since $x+1\leq \left\lceil \frac{\delta\cdot\mathsf{min}}{\mathsf{min} + h} \right\rceil \leq \left\lceil \frac{\mathsf{max}}{\mathsf{min} + h} \right\rceil\leq 2$ leads to $x\leq 1$, which contradicts to the current case that $x\geq 2$, it could not happen.
                \item For the second part, suppose there are $r'$ players here. Similar to former cases, the cases when there are singletons among $(S_{p'+1}',\dots,S_{q'}')$ can be neglected. Suppose among $(S^R_{p+1},\dots,S^R_{p+\tilde{p}})$, there are $p_1$ players form singletons and $p_2$ players form non-singletons. Let $x=r'-p_1-p_2$. Hence, we have $\alpha\geq \frac{p_1(\mathsf{min}+h)+\frac{p_2}{2}(2\mathsf{min}+h)+\frac{x}{2}(2\mathsf{min}+h)}{\frac{x+p_1+p_2}{2}\cdot\mathsf{max}} = \frac{2(p_1+p_2+x)\mathsf{min} + (2p_1+p_2+x)h}{(p_1+p_2+x)\mathsf{max}} \geq \frac{2\mathsf{min}+h}{\mathsf{max}}$.
            \end{itemize}
        \end{itemize}
    \end{itemize}
    Taking all cases together, we derive that when $h\leq \mathsf{max} - 2\mathsf{min}$, the ratio $\alpha$ of an AMC-$h$ policy will satisfy
    \[ \alpha\geq \min\left\{ \frac{2\mathsf{min}}{2\mathsf{min}+h}, \frac{3\mathsf{min}}{\mathsf{max}}, \frac{2\mathsf{min}+h}{\mathsf{max}}, \frac{7\mathsf{min}+2h}{3\mathsf{max}}, \frac{2\mathsf{min}+h/2}{\mathsf{max}} \right\},  \]
    and especially, if $\left\lceil \frac{\mathsf{max}}{\mathsf{min} + h} \right\rceil\leq 2$, it will satisfy
    \[ \alpha\geq \min\left\{ \frac{2\mathsf{min}}{2\mathsf{min}+h}, \frac{3\mathsf{min}}{\mathsf{max}}, \frac{2\mathsf{min}+h}{\mathsf{max}}, \frac{7\mathsf{min}+2h}{3\mathsf{max}}, \frac{4\mathsf{min}+h}{4\mathsf{min}+4h} \right\}.  \]

    Together with the case when $h > \mathsf{max} - 2\mathsf{min}$, we now consider the choice of $h$ as
    \[ h = \begin{cases}
        \mathsf{min} & \mathsf{max} < 4\mathsf{min} \text{ (i.e., }v\in V_3 \text{)}; \\
        2\mathsf{min} & \mathsf{max} \geq 4\mathsf{min} \text{ (i.e., }v\in V_{\geq 4} \text{)}.
    \end{cases} \]
    Notice that when $3\mathsf{min}\leq \mathsf{max} < 4\mathsf{min}$, we have $\mathsf{max} - 2\mathsf{min} \geq \mathsf{min} = h$ and $\left\lceil \frac{\mathsf{max}}{\mathsf{min} + h} \right\rceil < \left\lceil \frac{4\mathsf{min}}{\mathsf{min} + \mathsf{min}} \right\rceil = 2$ always be satisfied. Also, when $\mathsf{max} \geq 4\mathsf{min}$, we have we have $\mathsf{max} - 2\mathsf{min} \geq 2\mathsf{min} = h$ always be satisfied. Therefore, by the formula derived above, we have $\alpha\geq \mathsf{min}\left\{ \frac{1}{2}, \frac{3\mathsf{min}}{\mathsf{max}} \right\}$.
\end{proof}

{
\begin{example}
    Consider the same game introduced in Example~\ref{ex:amcbound} and suppose $\mathsf{max} = 3\mathsf{min}$ (i.e., $v\in V_3$). If using an AMC-$h$ policy described above, we will set a threshold as $h=\mathsf{min}$. Then, players from $a_2$ to $a_\delta$ will no longer join $a_1$, as their marginal contribution becomes $\epsilon$ after subtracting $h$. In this case, we can achieve the optimal social welfare.
\end{example}
}

At last, we use an example to see that the bound is tight. Since $\frac{3\mathsf{min}}{\mathsf{max}}$ is an upper bound of all policies, we only need to consider the case where the competitive ratio is  $1/2$. Consider the following case with $\mathsf{min} = 1$ and $\mathsf{max} = 6$, so that $h=2\mathsf{min} = 2$. There are four players with $v(\{a_1\}) = v(\{a_2\}) = v(\{a_3\}) = v(\{a_4\}) = 1$, $v(\{a_1, a_2\}) = 4+\epsilon$, $v(\{a_3,a_4\}) = v(\{a_1, a_4\}) = v(\{a_2, a_3\}) = 2$, and all other coalitions have value of $6$. When the arrival order $\pi = (a_1, a_2, a_3, a_4)$, the greedy players under AMC-$h$ will form a coalition structure as $\{\{a_1,a_2\},\{a_3\},\{a_4\}\}$ with social welfare being $6+\epsilon$. An optimal coalition structure is $\{\{a_1,a_3\},\{a_2,a_4\}\}$ with social welfare being $12$. Then the ratio will be $\frac{6+\epsilon}{12}$ and approaches to $\frac{1}{2}$ when $\epsilon\rightarrow0$.

\section{Non-irrevocable Policies}\label{sec:non-irr}
In this section, we consider non-irrevocable policies. This means, when a new player joins a coalition, the policy not only distributes the additional marginal contribution but also redistributes the value that has already been achieved by previous players. To implement such a process in an online manner, we propose the following scheme.

\begin{itemize}
    \item Once a new coalition $S$ is formed by a player, we establish a \emph{Bank} $B_S$ for the coalition.
    \item For each player $i$ who joins $S$, her instant distributed value consists of two parts, denoted by $\varphi_i(S,\pi_{|S}) = \bar\varphi_i(S,\pi_{|S}) + \tilde\varphi_i(S,\pi_{|S})$, where $\bar\varphi_i$ is the portion distributed to $i$, as in irrevocable policies, while $\tilde\varphi_i$ represents the portion promised to player $i$ by the policy but temporarily stored in the Bank $B_S$. If no additional players join the coalition $S$, $\tilde\varphi_i$ will be distributed to $i$; otherwise, it might be redistributed (becoming larger or smaller) when newcomers arrive.
\end{itemize}

Intuitively, the above scheme reserves a pool of rewards for newcomers, and it can redistribute some of the rewards to previous players according to their contributions to the whole coalition. In practice, the second part can be implemented as an annual bonus or promised options, which could be adjusted after the project is completed.

In the extreme case, if we set the part $\bar\varphi_i$ to always be $0$, then any kind of non-irrevocable policy can be implemented. However, it does not make sense in practical applications, as players would not trust any promised reward stored in the Bank. Hence, we will also consider a variant of greedy players. Normal greedy players will consider the entire value of $\varphi_i(S,\pi_{|S})$ and try to maximize it. A variant, called \emph{pessimistic greedy players}, assumes that a player might consider that she will get nothing from the second part in the worst-case scenario, and therefore, she only considers maximizing the first part of the reward, $\bar\varphi_i$, which cannot be redistributed. Normal greedy players and pessimistic greedy players can be seen as two extreme cases of the model of players' actions. For non-irrevocable policies, we will consider both of these action models.

With non-irrevocable policies, a direct question is whether we can achieve a better competitive ratio than that of irrevocable policies. Before answering this question, recall that, as stated in Proposition~\ref{prop:ir}, any irrevocable policy is IR. This is obviously not true for any non-irrevocable policy due to redistribution of value created. As a basic requirement for stability, it is reasonable to first consider IR non-irrevocable policies. However, unfortunately, if we insist on the constraint of individual rationality, it is impossible to achieve an upper bound higher than $\frac{3\mathsf{min}}{\mathsf{max}}$, regardless of the action model chosen. The following theorem formalizes this limitation.

\begin{theorem}\label{thm:nirrirub}
    When the characteristic function $v\in V_{\geq 3}$, the competitive ratio of an IR distribution policy cannot exceed $\frac{3\mathsf{min}}{\mathsf{max}}$.
\end{theorem}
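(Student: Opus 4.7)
The plan is to reuse the three-player counterexample from the proof of Theorem~\ref{thm:irrv3}: $N=\{x,y,z\}$, arrival order $\pi=(x,y,z)$, with $v(\{x\})=v(\{y\})=v(\{z\})=\mathsf{min}$, $v(\{x,y\})=\mathsf{min}$, $v(\{x,z\})=v(\{y,z\})=2\mathsf{min}$, and $v(\{x,y,z\})=\mathsf{max}$. Note this $v$ lies in $V_{\geq 3}$ (since $\mathsf{max}\geq 3\mathsf{min}$) and is monotone. Because this is an upper-bound claim, a single instance that holds against every IR policy and both player action models (greedy and pessimistic greedy) suffices.

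The core observation, and the main difference from the irrevocable setting, is a purely structural impossibility: the coalition $\{x,y\}$ \emph{cannot} appear in $C_g^t$ under any IR policy whatsoever. Indeed, if $\{x,y\}\in C_g^t$, then non-wastefulness forces $\varphi_x+\varphi_y=v(\{x,y\})=\mathsf{min}$, while IR demands $\varphi_x\geq v(\{x\})=\mathsf{min}$ and $\varphi_y\geq v(\{y\})=\mathsf{min}$, a contradiction. This impossibility argument applies to the \emph{total} distributed value $\varphi_i = \bar\varphi_i + \tilde\varphi_i$, so it cannot be evaded by deferring payments into the Bank or by any redistribution scheme, and it is independent of whether the player is normal-greedy or pessimistic-greedy: whichever part of $\varphi$ the player evaluates, her choice to join $x$ would produce a coalition whose existence breaks IR.

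Consequently, when $y$ arrives, under any IR policy she is forced to create a new singleton $\{y\}$, producing $C_g^2=\{\{x\},\{y\}\}$. When $z$ arrives, her only options are to join $\{x\}$, join $\{y\}$, or form a singleton; these yield coalition structures with social welfare $2\mathsf{min}+\mathsf{min}$, $\mathsf{min}+2\mathsf{min}$, and $3\mathsf{min}$ respectively. Hence $\mathsf{SW}(C_g)\leq 3\mathsf{min}$, whereas the optimal coalition structure $\{\{x,y,z\}\}$ gives $\mathsf{SW}(C^*)=\mathsf{max}$, so the competitive ratio is at most $\frac{3\mathsf{min}}{\mathsf{max}}$.

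The main subtlety to handle carefully is the last step above: one must verify that \emph{every} action available to $z$, under \emph{every} IR policy (irrevocable or not, with any valid split between $\bar\varphi$ and $\tilde\varphi$), still leaves her unable to induce the grand coalition $\{x,y,z\}$. Since by Assumption~\ref{asum:one-time-decision} a newly arriving player can only join one existing coalition or start her own, and $\{x\}$ and $\{y\}$ are separate coalitions at time $3$, there is simply no mechanism by which $z$'s arrival can merge them. This is the decisive point and is where the online sequential nature of the model, combined with the earlier IR-forced separation of $x$ and $y$, locks in the bound.
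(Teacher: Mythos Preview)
Your proposal is correct and follows essentially the same approach as the paper: the identical three-player instance is used, and the key step---that $\{x,y\}$ cannot appear in $C_g^2$ because IR together with non-wastefulness would force $\varphi_x+\varphi_y=\mathsf{min}$ while each must be at least $\mathsf{min}$---is exactly the paper's argument, only phrased as a structural impossibility rather than as ``$y$ receives no positive reward.'' Your additional remarks on the pessimistic action model and on why $z$ cannot merge the two singletons are sound elaborations but not departures from the paper's line of reasoning.
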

\begin{proof}
    Consider the following characteristic function $v\in V_{\geq 3}$: 
    \begin{align*}
        & v(\{x\}) = \mathsf{min} \qquad v(\{y\}) = \mathsf{min} \qquad v(\{z\}) = \mathsf{min} \qquad v(\{x,y,z\}) = \mathsf{max} \\
        & v(\{x,y\}) = \mathsf{min} \qquad v(\{x,z\}) = 2\mathsf{min} \qquad v(\{y,z\}) = 2\mathsf{min} 
    \end{align*}
    and $\pi = (x,y,z)$. Then, when player $y$ comes, by individual rationality, she cannot get any positive reward (no matter whether it could be redistributed in the future) by joining with player $x$; otherwise,  player $x$ will violate IR. Hence, she must form a new coalition. This means when player $z$ comes, whichever coalition she decides to join or create a new one, the final social welfare is always $3\mathsf{min}$. However, the best social welfare they can achieve is to form a grand coalition, which is $\mathsf{max}(\geq 3\mathsf{min})$. Therefore, any IR non-irrevocable distribution policy cannot achieve a competitive ratio greater than $\frac{3\mathsf{min}}{\mathsf{max}}$.
\end{proof}

Therefore, if we want a non-irrevocable policy to achieve a better competitive ratio, we need to sacrifice the property of IR. First, we show an upper bound for competitive ratios to assess whether it is worthwhile to sacrifice IR for better near optimal social welfare.

\begin{theorem}\label{thm:nirrub}
    For any non-anticipative distribution policy, when the characteristic function $v\in V_{\geq 3}$, its competitive ratio cannot exceed $\frac{3}{n}$ if $n\leq\frac{\mathsf{max}}{\mathsf{min}}-1$; otherwise, it cannot exceed $\frac{3\mathsf{min}}{\mathsf{max}}$.
\end{theorem}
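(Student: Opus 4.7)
The plan is to construct, for each non-anticipative policy, an adversarial pair $(v, \pi)$ witnessing the claimed upper bound, treating the two cases from the theorem statement separately. The main lever is non-anticipation: if $v$ is designed so that its restriction to every already-observed subset equals the flat all-$\mathsf{min}$ function until late in the arrival sequence, then the policy's distribution decisions up to that point are a deterministic function of the flat restriction, so the adversary can pre-compute the induced coalition structure and then tailor the yet-unobserved high-value coalitions adaptively.

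For Case~1 ($n \leq \mathsf{max}/\mathsf{min} - 1$), I would partition $N$ into $m = \lceil n/3 \rceil$ triples $T_1, \dots, T_m$ and define $v(S) = \mathsf{max}$ whenever $S \supseteq T_k$ for some $k$ and $v(S) = \mathsf{min}$ otherwise. This $v$ is monotone, lies in $V_{\geq 3}$, and has $\mathsf{SW}(C^*) = m \cdot \mathsf{max}$, attained by the triple partition. Picking an arrival order that places exactly one member of each triple in the last $m$ positions keeps $v$ flat on every prefix of length $n - m$, so by non-anticipation the coalition structure $C^{n-m}$ is determined and the triple partition can be chosen after reading it off. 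The key structural claim I would prove is that, for a suitable adaptive choice, $\mathsf{SW}(C_g) \leq \mathsf{max}$: when $C^{n-m}$ is concentrated the triple completions fuse into a single super-coalition whose value is capped at $\mathsf{max}$ by boundedness, while when $C^{n-m}$ is dispersed a matching-style argument spreads the triple members across distinct coalitions and prevents any triple from forming. Either way the ratio is at most $1/m \leq 3/n$.

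For Case~2 ($n > \mathsf{max}/\mathsf{min} - 1$), the same triple-based family of instances applies, but the ratio is bounded differently because $n\mathsf{min}$ now dominates $\mathsf{max}$. Either the triples collapse into a single super-coalition of value at most $\mathsf{max}$, giving ratio $\mathsf{max}/(m \cdot \mathsf{max}) = 3/n$, or they fail to form and greedy welfare is at most $n\mathsf{min}$ against an optimum of $m\mathsf{max}$, giving ratio $\frac{3\mathsf{min}}{\mathsf{max}}$. Since $\mathsf{max} < (n+1)\mathsf{min}$ forces $\frac{3}{n} \leq \frac{3\mathsf{min}}{\mathsf{max}}$ in this regime, both sub-cases are dominated by the announced $\frac{3\mathsf{min}}{\mathsf{max}}$.

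The main obstacle is the structural claim in Case~1. Because a non-anticipative policy may still use the Bank mechanism to bribe players into forming mid-sized coalitions on the uninformative flat observations, the triple assignment must be robust across every possible shape of $C^{n-m}$. The concentrated and fully fragmented regimes fall out cleanly, but intermediate (``unbalanced'') configurations---one large coalition together with several small ones---may leave no room to spread the triples; in that case I would switch to a pair-based hidden construction (value-$\mathsf{max}$ pairs each with one late-arriving member), whose optimum $(n/2)\mathsf{max}$ against a greedy welfare bounded by $\mathsf{max}$ gives the even stronger $2/n \leq 3/n$ bound. Making this adaptive choice of construction explicit, while ensuring $v$ remains monotone and bounded by $\mathsf{max}$, is where the technical effort concentrates.
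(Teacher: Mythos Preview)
Your plan has a genuine gap in the central structural claim. In the concentrated sub-case you assert $\mathsf{SW}(C_g)\le\mathsf{max}$, reasoning that once all early players sit in one coalition the late ``triple completers'' fuse into it and the value is capped at $\mathsf{max}$ by boundedness. But the policy is under no obligation to send the late players there: after the first late arrival pushes the big coalition to value $\mathsf{max}$, every subsequent late player has marginal contribution $0$ in that coalition, and the policy can distribute so that she prefers to open a singleton worth $\mathsf{min}$. This yields $\mathsf{SW}(C_g)=\mathsf{max}+(m-1)\mathsf{min}$, and against the optimum $m\cdot\mathsf{max}$ the ratio is $\tfrac{1}{m}+\tfrac{(m-1)\mathsf{min}}{m\,\mathsf{max}}$, which is strictly larger than $3/n=1/m$. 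Your fallback pair construction has the same defect: the correct greedy welfare there is $\mathsf{max}+(n/2-1)\mathsf{min}$, not $\mathsf{max}$. Worse, the pair construction is useless in the dispersed regime you invoke it for, because a late player can always be steered to her partner's coalition and complete a value-$\mathsf{max}$ pair, giving ratio $1$. So neither a single triple family nor a single pair family works across all shapes of $C^{n-m}$.

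The paper avoids this by using a \emph{sequentially} adaptive adversary rather than committing to one instance family. It feeds the policy a near-flat stream $v(\{a_1,\dots,a_i\})=\mathsf{min}+(i-1)\epsilon$ and branches on the policy's behaviour. If at some step $k$ the policy makes $a_k$ leave the single coalition, the adversary immediately completes a $(k{+}1)$-player instance in which every choice of $a_{k+1}$ yields total welfare $3\mathsf{min}$ against an optimum of $\mathsf{max}$, giving the $3\mathsf{min}/\mathsf{max}$ branch. If the policy never splits, all $n/2$ early players are provably in one coalition, and only then does the adversary reveal $n/2$ disjoint value-$\mathsf{max}$ pairs; the resulting welfare is $\mathsf{max}+(n/2-1)\mathsf{min}$ against $(n/2)\mathsf{max}$, which simplifies to $\tfrac{2(\mathsf{max}-\mathsf{min})}{n\,\mathsf{max}}+\tfrac{\mathsf{min}}{\mathsf{max}}$. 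Taking the maximum of the two branches and comparing with $3/n$ and $3\mathsf{min}/\mathsf{max}$ gives the theorem. The essential idea you are missing is that the adversary must change the \emph{type} of hidden structure (grand-coalition versus many pairs) depending on whether the policy concentrates or disperses the flat prefix, not merely re-parameterise a fixed construction.
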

\begin{proof}
    Consider an arrival order $\pi$ with the first $k$ players $a_1$, $a_2$, $\dots$, $a_k$ such that $v(\{a_i\}) = \mathsf{min}$ and $v(\{a_1,a_2,\dots,a_i\}) = \mathsf{min}+(i-1)\epsilon$ for any $i\leq k$. For any possible $k\geq 2$, if the policy guides the player $a_k$ to form a new coalition, then there might be the case such that $v(\{a_{k+1}\})=\mathsf{min}$, $v(\{a_1,a_2,\dots,a_{k-1},a_{k+1}\}) = v(\{a_k,a_{k+1}\}) = 2\mathsf{min}$, and $v(\{a_1,a_2,\dots,a_{k+1}\})=\mathsf{max}$. This means whichever coalition player $a_{k+1}$ decides to join or create a new one, the final social welfare is always $3\mathsf{min}$ but the optimal one is $\mathsf{max}$. If the policy insists to guide the new player to join the old coalition until some player with different contribution occurs, then consider the case such that $v(\{a_1,a_2,\dots,a_{k+1}\})=\mathsf{max}$, $v(\{a_i,a_{k+i}\})=\mathsf{max}$ for any $1\leq i\leq k$, and $v(S)=|S|\cdot\mathsf{min}$ for any $S\subseteq\{a_{k+2},\dots,a_{2k}\}$. This means no player should join the first coalition after $a_{k+1}$, and no matter what kind of coalition structure is formed by players after $a_{k+1}$, the final social welfare is $\mathsf{max} + (n/2-1)\mathsf{min}$, while the optimal one is $(n/2)\cdot\mathsf{max}$. Therefore, the competitive ratio is at most 
    
    $ \max\left\{ \frac{3\mathsf{min}}{\mathsf{max}}, \frac{2(\mathsf{max}-\mathsf{min})}{n\cdot\mathsf{max}} + \frac{\mathsf{min}}{\mathsf{max}} \right\} \leq \begin{cases}
        \frac{3}{n} & \text{if } n\leq\frac{\mathsf{max}}{\mathsf{min}} -1; \\
        \frac{3\mathsf{min}}{\mathsf{max}} & \text{otherwise.}
    \end{cases} $
\end{proof}

Theorem~\ref{thm:nirrub} implies that even when we sacrifice the property of IR, it cannot improve the worst-case social welfare if $n$ is sufficiently large. Therefore, we now focus on the cases where $n$ is no larger than $\frac{\mathsf{max}-\mathsf{min}}{\mathsf{min}}$. These cases can also be viewed as situations where the maximum value that can be created through cooperation is much larger than the minimum value. To achieve better social welfare, this intuition suggests that we should try to force players to join an existing coalition and wait for a sufficient large value to be created.

\begin{theorem}\label{thm:nirrlb}
    For both action models with greedy and pessimistic greedy players, there exists a non-irrevocable and non-anticipative policy that has a competitive ratio of $\frac{2}{n}$, when $v\in V_{\geq 3}$ and $n>2$. 
\end{theorem}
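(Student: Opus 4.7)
I would construct a non-irrevocable, non-anticipative policy $\varphi^*$ that uses the bank $B_S$ of each coalition as an adaptive reward pool, and show by direct case analysis that it achieves competitive ratio at least $2/n$. The first player $a_1$ always forms $C_1$, and the policy sets her initial split between $\bar\varphi_{a_1}$ and $\tilde\varphi_{a_1}$ based on the observed value $v(\{a_1\})$: when $v(\{a_1\})$ is comparatively large, all of $v(\{a_1\})$ is placed irrevocably in $\bar\varphi_{a_1}$, which empties the bank of $C_1$ and thereby discourages future joiners; when $v(\{a_1\})$ is small, $\bar\varphi_{a_1}=0$ and the entire value sits in the bank, creating room to reward any future joiner.

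At each subsequent arrival $a_t$, the policy computes the marginal value $m_t(C)=v(C\cup\{a_t\})-v(C)$ for each existing coalition $C$ and compares it with $v(\{a_t\})$. Through a careful choice of $\bar\varphi$ and $\tilde\varphi$, using the bank to shift value between previously promised $\tilde\varphi$ slots and the newcomer's allocation, the policy engineers a strict preference for the SW-maximizing option under both action models---strict positivity of $\bar\varphi$ differences handles pessimistic players, while strict positivity of $\bar\varphi+\tilde\varphi$ differences (enabled by bank redistribution) handles greedy players. For the final arrival $a_n$, the policy explicitly compares $v(N)$ against $v(C_1)+v(\{a_n\})$ and forces whichever option yields larger SW.

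The ratio analysis proceeds by two cases determined by the policy's mode at $a_1$. In the ``join'' mode the final structure is either the grand coalition with SW $=v(N)$ or the split $\{C_1,\{a_n\}\}$ with SW $\geq v(C_1)+\mathsf{min}$; leveraging Lemma~\ref{lem:ssize} together with assumption $(*)$ (equivalently $\mathsf{max}\geq(n+1)\mathsf{min}$), one can bound $\mathsf{SW}(C^*)\leq(n/2)\cdot\mathsf{SW}(C_g)+O(\mathsf{min})$, giving the $2/n$ bound. In the ``singleton'' mode, $\mathsf{SW}(C_g)\geq v(\{a_1\})+(n-1)\mathsf{min}$, and since the mode was entered only when $v(\{a_1\})$ is large relative to $\mathsf{max}$, a direct comparison to any optimal partition (whose per-coalition value is at most $\mathsf{max}$ and whose coalitions number at most $n$) again yields the $2/n$ ratio.

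The main obstacle is the non-anticipative constraint: the policy cannot see $v(N)$ or any marginal value before the coalition containing the relevant players is actually formed, so its mode choice at $a_1$ may turn out to be ``wrong'' in hindsight. Assumption $(*)$ is essential here---the factor-$(n+1)$ gap between $\mathsf{max}$ and $\mathsf{min}$ provides exactly the slack needed so that a wrong early commitment can still be absorbed within the $2/n$ guarantee. The other delicate step is simultaneous incentive compatibility: the non-wastefulness constraint $\sum_i\varphi_i(S)=v(S)$ limits how strictly the policy can separate options. I would address this by always keeping $\bar\varphi$ values of earlier players as low as possible (stored in the bank) so that both the $\bar\varphi$ gap and the $\varphi$ gap for the intended option are strictly positive, together with an $\epsilon$-perturbation argument mirroring the paper's footnote on tie-breaking.
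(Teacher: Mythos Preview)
Your proposal has a genuine gap in the ``join'' mode analysis. You commit to a mode based solely on $v(\{a_1\})$, and in join mode you assert the final structure is $\{N\}$ or $\{C_1,\{a_n\}\}$ with $C_1=\{a_1,\dots,a_{n-1}\}$. But neither $v(N)$ nor $v(C_1)$ is controlled by $v(\{a_1\})$: take $v(\{a_1\})=\mathsf{min}$ (so you enter join mode), $v(\{a_1,\dots,a_k\})=\mathsf{min}$ for every prefix $k$, yet $v(\{a_i,a_{i+n/2}\})=\mathsf{max}$ for $1\le i\le n/2$. Then $\mathsf{SW}(C_g)\le 2\mathsf{min}$ while $\mathsf{SW}(C^*)=(n/2)\mathsf{max}$, giving ratio $\frac{4\mathsf{min}}{n\,\mathsf{max}}\le\frac{4}{n(n+1)}$ under assumption~$(*)$---far below $\frac{2}{n}$. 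Your claimed inequality $\mathsf{SW}(C^*)\le(n/2)\cdot\mathsf{SW}(C_g)+O(\mathsf{min})$ simply fails here; Lemma~\ref{lem:ssize} bounds coalition \emph{sizes} in $C^*$, but supplies no lower bound on $\mathsf{SW}(C_g)$. There is also an internal tension you never resolve: if at each step you steer $a_t$ toward ``the SW-maximizing option'' (largest current marginal), you recreate AMC-like behavior and cannot guarantee everyone lands in $C_1$; if instead you force everyone into $C_1$ regardless of marginals, you lose the bound as above.

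The paper's argument avoids this by thresholding on the \emph{coalition's accumulated value}, not on $v(\{a_1\})$ or on marginals. Every newcomer is pushed into the current coalition $S$ as long as $v(S)<\mu\cdot\mathsf{max}$; once the threshold is met, the next arrival starts a fresh coalition. For greedy players this is achieved by promising the entire current value $v(S\cup\{i\})$ to the newcomer via $\tilde\varphi$ (and zero once the threshold holds); for pessimistic players a tiny increasing $\bar\varphi_i=\epsilon\,|S{+}1|\,\mathsf{min}$ makes joining strictly preferred. Taking $\mu=1$ guarantees the first coalition attains value $\mathsf{max}$, so $\mathsf{SW}(C_g)\ge\mathsf{max}+\mathsf{min}$ against $\mathsf{SW}(C^*)\le(n/2)\mathsf{max}$, yielding $\alpha\ge\frac{2}{n}$. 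The idea missing from your plan is precisely this running threshold on $v(S)$: it is the only mechanism that forces at least one coalition in $C_g$ to be worth $\mathsf{max}$, which is the leverage needed against an optimum that can place $\mathsf{max}$ in up to $n/2$ disjoint pairs.
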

\begin{proof}
    For both the greedy and pessimistic greedy players action model, we establish policies with parameter $\mu\in(0,1]$ as follows.
    
    \begin{itemize}
        \item For the greedy players model, when a player $i$ joins a coalition $S$,
        \begin{itemize}
            \item if $v(S)\geq \mu\cdot\mathsf{max}$, then all of $i$'s marginal contribution is given to the previous player ($S\neq\emptyset$ since $v(S)\neq 0$) and $\varphi_i = 0$; 
            \item if $v(S)< \mu\cdot\mathsf{max}$, then $\bar\varphi_i(S\cup\{i\})=0$ and $\tilde\varphi_i(S\cup\{i\})=v(S\cup\{i\})$, i.e., all value of the current coalition is allocated to $i$, but all these value could be reallocated in the future.
        \end{itemize}
        \item For the model with pessimistic greedy players, choose a value $\epsilon<\frac{1}{n^2}$. When a player $i$ joins a coalition $S$,
        \begin{itemize}
            \item if $v(S)\geq \mu\cdot\mathsf{max}$, then all of $i$'s marginal contribution is given to the previous player ($S\neq\emptyset$ since $v(S)\neq 0$), and $\varphi_i = 0$;
            \item if $v(S)< \mu\cdot\mathsf{max}$, then $\bar\varphi_i(S\cup\{i\})=\epsilon\cdot|S+1|\cdot\mathsf{min}$ and $\tilde\varphi_i(S\cup\{i\})=B_S$, i.e., the value given to $i$ that cannot be redistributed is $\epsilon\cdot|S+1|\cdot\mathsf{min}$, and all the value in the Bank is promised to $i$ as she can get it when no new player joins $S$.
        \end{itemize}
    \end{itemize}
    We can observe that the coalition structure formed by two policies in two action models is the same: players first form a single coalition until the value reaches $\mu\cdot\mathsf{max}$, then form the next coalition, and so on so forth. Since $v$ is monotone, there must exist a time $t\leq n$ such that $v(\pi_{\preceq t})\geq\mu\cdot\mathsf{max}$. Hence, the worst case happens when the rest of the players can only create a value of $\mathsf{min}$ together, but they can create a value of $\mathsf{max}$ when paired with someone in the first coalition, i.e., $n=2k$, $v(\{a_1,a_{k+1}\}) = \mu\cdot\mathsf{max}$, $v(\{a_{2\leq i\leq k}, a_{k+i}\}) = \mathsf{max}$ and $v(\{a_1,a_2,\dots,a_k\}) = v(\{a_{k+2}, a_{k+3},\dots,a_{2k}\}) = \mathsf{min}$. Then the ratio for both action models is optimized when $\mu = 1$ with $n >2$:
    $ \alpha \geq \frac{\mu\cdot\mathsf{max} + \mathsf{min}}{\left(n/2-1\right)\cdot\mathsf{max} + \mu\cdot\mathsf{max}} \geq_{\text{if }\mu=1} \frac{2\mathsf{max}+\mathsf{min}}{n\cdot\mathsf{max}} \geq \frac{2}{n} $.
\end{proof}

\section{Conclusions and Future Directions}\label{sec:future}
We study a cooperative game where players arrive sequentially and form coalitions to complete tasks, aiming to design an online policy that maximizes social welfare in monotone, bounded games. Our main result establishes a $\frac{3\mathsf{min}}{\mathsf{max}}$ competitive ratio upper bound for irrevocable policies, and proposes a policy achieving competitive ratio of $\min\left\{\frac{1}{2}, \frac{3\mathsf{min}}{\mathsf{max}}\right\}$. We also discuss non-irrevocable policies.

In our model, we assume that agents arrive one at a time and provide the worst-case competitive ratio that maximizes social welfare. This assumption is also adopted in prior literature on online coalition formation games \citep{boehmer2023causes,bullinger2024stability, flammini2021online}. For the general case where two or more agents may arrive simultaneously, most of our analysis remains valid under the assumption that agents cannot observe the arrival of others in the same round. If they choose different coalitions, it just creates a parallel branch of coalitions; if they choose the same one, we can treat them as a single player. However, if they know the information of each other, we may have to consider the equilibrium of their choices each round. We believe it represents an exciting and non-trivial direction for future work.

There are other several future directions worth investigating. For example, for non-irrevocable policies, we have shown two policies for greedy and pessimistic greedy players, respectively, which lead to the same outcome. An intriguing and general question is whether, for any given policy  $\varphi$ and greedy players, there exists another policy $\varphi'$ such that the same outcome can be achieved with pessimistic greedy players, and vice versa; furthermore, even for a mixture of two types of players. If this is true, we would not need to consider how players worry about the future. 
Another significant area for future work could be considering competitive ratios in expected value form, rather than the worst-case form considered in this work.

\vspace{3mm}
\noindent \textbf{Acknowledgments.} This work is partially supported by  JST ERATO Grant Number JPMJER2301 and JSPS KAKENHI Grant Number JP21H04979.

\bibliographystyle{plainnat}
\bibliography{mybibfile}

\end{document}